\newtheorem{ex}{Example}
\newcommand{\bex}{\begin{ex}}
\newcommand{\eex}{\end{ex}}
 \newtheorem{remark}{Remark}
\newtheorem{theorem}{Theorem}
\newtheorem{definition}{Definition}
\newtheorem{lemma}{Lemma}
\newtheorem{proposition}{Proposition}
\newtheorem{assum}{Assumption}
\newtheorem{corollary}{Corollary}
\newcommand{\rr}{\mathbb{R}}
\newcommand{\ee}{\mathcal{E}}
\newcommand{\summ}{\sum\limits}   
\newcommand{\ma}{\max\limits}
\newcommand{\limm}{\lim\limits}
\begin{document}

\title{The relationship between general equilibrium models with infinitely-lived agents and overlapping generations models, and some applications\footnote{I would like to thank Stefano Bosi, Cuong Le Van, Alexis Akira Toda, an associate editor and two anonymous referees for their helpful comments and discussions.}}
\author{Ngoc-Sang PHAM\thanks{Emails: npham@em-normandie.fr and pns.pham@gmail.com.  Phone:  +33 2 50 32 04 08. Address: EM Normandie (campus Caen), 9 Rue Claude Bloch, 14000 Caen, France. }\\
EM Normandie Business School, M\'etis Lab (France) }
\date{\today}
\maketitle

\begin{abstract}We prove that a two-cycle equilibrium in a general equilibrium model with infinitely-lived agents (GEILA) constitutes an equilibrium in an overlapping generations (OLG) model. Conversely, an equilibrium in an OLG model that satisfies additional conditions is part of an equilibrium in a GEILA model. Our framework, which includes three assets (physical capital, a Lucas tree, and fiat money), encompasses both exchange and production economies. As an application, we demonstrate that equilibrium indeterminacy and rational asset price bubbles can arise not only in OLG models but also in GEILA models.
\\
\newline
{\it Keywords:}  infinite-horizon, general equilibrium,  infinitely-lived agent, overlapping generations, asset price bubble, fiat money, equilibrium indeterminacy.
\\
\textit{JEL Classifications}: D51, E32, E44.

\end{abstract}



\section{Introduction}

General equilibrium models with infinitely-lived agents (GEILA)  and overlapping generations (OLG) models are two workhorses in macroeconomics. A vast body of literature explores these two frameworks.\footnote{See \cite{delaCroixMichel2002} for an introduction to OLG models, and \cite{Becker2006}, \cite{MagillQuinzii_book2008}, \citet{LeVanPham2016}, among others, for an introduction to GEILA models.} This raises a natural question: what is the relationship between these two classes of models? If so, can this relationship help us to address economic questions?

Looking back to history,  \cite{woodford86} considered an economy with capital accumulation and money, where there are two classes of infinitely-lived agents (capitalists and workers). \cite{woodford86} studied a special setup in which capitalists have logarithmic utility, never hold money, and face a single trade-off between consumption and investment. Workers, on the other hand, never purchase capital and face a trade-off between consumption and leisure. Then, he obtained an equilibrium system, which is similar to those in an OLG model with two-period-lived workers.\footnote{Budget constrains (1.1b) in \cite{woodford86} writes $p_t\big((c^w_t+(k^w_t-dk^w_{t-1})\big)+M^w_{t+1}=M^w_{t}+r_tk^w_{t-1}+w_tn_t$. He also imposes constraints $k^w_t\geq 0$, $M^w_{t+1}\geq 0$, and borrowing constraint $p_t\big((c^w_t+(k^w_t-dk^w_{t-1})\big)\leq M^w_{t}+r_tk^w_{t-1}$. He focuses on the case in which workers choose $k^w_t=0$ for any $t$ in optimal.} 
Following \cite{woodford86}, \cite{Kocherlakota1992} wrote, in his footnote 4, that "In both examples, short sales constraints that bind in alternating periods serve to make the infinite-horizon economy look like an overlapping generations economy". 

To date, neither \cite{woodford86}, \cite{Kocherlakota1992}, nor the broader literature has formally established a connection between these two classes of models in a general framework. Our paper seeks to address this gap.

Our contribution is two-fold. First, we prove that (1) a two-cycle equilibrium in a GEILA model is also an equilibrium in an OLG model, and (2) conversely, an equilibrium in an OLG model is part of a two-cycle equilibrium in a GEILA model if and only if it satisfies additional conditions including the transversality conditions. 

Compared to \cite{woodford86} and \cite{Kocherlakota1992}, we establish the  observational connection in more general frameworks (including general utility functions, general endowments, and multiple assets). It should be noted that an equilibrium in an OLG model is not automatically part of an equilibrium in GEILA models. In particular, it is necessary to check the transversality conditions.

The existing literature also highlights a connection between standard OLG models and  dynamic programming frameworks. \citet{Aiyagari1985} demonstrates that the dynamics of capital in a standard OLG model (Diamond's model) can be derived from a discounted dynamic programming framework. \cite{Hou1987} considers pure exchange economies and establishes an observational equivalence between an OLG model with agents living for two periods and a cash-in-advance economy with a single infinitely-lived representative agent. \cite{LovoPolemarchakis2010} depart from a model with an infinitely-lived representative agent and show how the qualitative properties of OLG economies can be replicated by introducing a certain level of myopia.\footnote{It is also known that, in some cases, an OLG model with positive bequests can be reformulated as an optimal growth model \`a la Ramsey (see \cite{Barro1974}, \cite{Aiyagari1992}, \cite{MichelThibaultVidal2006} among others).}

Our paper focuses on general equilibrium models with a finite number of infinitely-lived households, which are more general than models with a single representative household.  Notice that the results in \citet{Aiyagari1985} and \cite{Hou1987} cannot be applied to our models because our framework includes endowments, physical capital, and long-lived assets (both with and without dividends), while the model in \citet{Aiyagari1985} features only physical capital (similar to a one-sector optimal growth model), and \cite{Hou1987} considers an exchange economy.




As the second contribution, we apply our results to show how equilibrium indeterminacy and rational asset price bubbles can arise in both classes of models.

Our first application concerns equilibrium indeterminacy.  Looking back at history, \citet{KehoeLevine1985} consider two stationary pure exchange economies: the first involves a finite number of infinitely-lived consumers, and the second (an OLG model) features an infinite number of finitely-lived consumers. They argue that these two models have different implications: in the first model, equilibria are generically determinate, whereas this is not the case in the second model.\footnote{See \citet{Farmer2019} for an overview of equilibrium indeterminacy in macroeconomics.} 


The models in our paper are more general than those in \citet{KehoeLevine1985} in the sense that we incorporate capital accumulation and imperfect financial markets (in forms of borrowing constraints). Different from \citet{KehoeLevine1985}, we show that equilibria may be indeterminate in both models.  Precisely, we demonstrate that in a non-stationary exchange economy with a finite number of infinitely-lived consumers, equilibrium indeterminacy can arise. The intuition is that in such an economy, the equilibrium system can be supported by an OLG model, which creates room for indeterminacy.

The second application concerns the issue of rational asset price bubbles which has attracted significant attention from scholars in recent years.\footnote{For detailed surveys, see \cite{BrunnermeierOehmke2012}, \cite{Miao2014}, \citet{MartinVentura2018}, \citet{HiranoTodaJME2024,HiranoToda2025Notes}.} Since \citet{tirole85}, it has become relatively straightforward to build OLG models with bubbles. However, in infinite-horizon general equilibrium models, it is well known that constructing a model where rational asset price bubbles exist is more challenging, particularly when assets yield dividends \citep{tirole82, Kocherlakota1992, SantosWoodford1997}.\footnote{\cite{LeVanPham2016}, \cite{BosiLevanPham2017, BosiLevanPham2017b, BosiLevanPham2018, BosiHaHuyLeVanPhamPham2018, BosiLevanPham2022}  construct models where assets with positive dividends exhibit bubbles. Inspired by \cite{wilson81} and \cite{tirole85} (Proposition 1.c),  \cite{HiranoTodaJPE2025}  construct some models and provide conditions under which any equilibrium (if it exists) is bubbly.} A key difficulty, as proved in \cite{BosiLevanPham2022}'s Proposition 2, is that, in general, the existence of bubbles in such models requires that the asset holdings of at least two agents fluctuate over time and that the borrowing constraints of at least two agents bind at infinitely many periods.\footnote{See also discussions in Section 4.1 in \cite{BosiLevanPham2018}.}

This property leads to the notion of a two-cycle equilibrium in GEILA models, as introduced above (note that this two-cycle structure is the simplest one of the GEILA models that can generate rational asset price bubbles). Building on our observational connection, this two-cycle equilibrium can be supported by an equilibrium in an OLG model.  Thus, if the latter equilibrium exhibits a bubble, we can apply our results and impose additional conditions (which hold under reasonable assumptions) to prove that it is part of a bubbly equilibrium in a GEILA model. 

Thanks to our observational connection, constructing models with infinitely-lived agents where asset bubbles exist is no longer a difficult task. This insight allows us not only to recover but also to extend many models of rational bubbles found in the literature. For instance, Example 1 in
 \cite{Kocherlakota1992} presents an equilibrium where the fiat money has a positive price. However, by applying our result, we go further by showing that, in his model, there exists a continuum of equilibria where the fiat money's price is  strictly positive.

The rest of the paper is organized as follows. Section \ref{twomodels} introduces both GEILA and OLG models. Section \ref{sectionmain} formally establishes the connection between these two models. Section \ref{appli} presents applications of our results to the study of equilibrium indeterminacy and asset price bubbles. Technical proofs are presented in  Appendix \ref{appen}.


\section{Two models} 
\label{twomodels}

\subsection{An overlapping generations model}\label{OLG-1}
We present an OLG framework, which can be considered as a unified model of \citet{tirole85} and \citet{weil90}.\footnote{See \cite{delaCroixMichel2002} for an introduction of OLG models.} This is a discrete time model and the set of times is  $\{0, 1, 2, \ldots\}$. There is a consumption good, which is taken as num\'eraire. 

In each period $t$, there is a representative firm (without market power) that maximizes its profit $\ma_{K_t,L_t\geq 0}\big\{F(K_{t},L_{t})-r_{t}K_{t}-w_{t}L_{t}\big\}$ by choosing the physical capital $K_t$ and the labor $L_t$, where $r_t$ is the rental rate and $w_t$ is the wage. 

The consumer born in period $t$ lives for two periods (young and old) and has  $e_{t}^{y}\geq 0$ units of consumption as endowments at date when young and $e_{t+1}^{o}\geq 0$ when old. Endowments are exogenous. We assume that there is no population growth and the population size $N_t$ on date $t$  is normalized to 1.   

This consumer can invest/save using three assets: the physical capital, a long-lived asset that pays dividends (Lucas' tree), and a pure bubble asset.  The structure of the long-lived asset (Lucas' tree) is the following: if the consumer buys $1$ unit of this asset with price $q_t$ on date $t$, she will receive $d_{t+1} $ units of consumption good as dividend and she will be able to resell the asset with price $q_{t+1}$ on  date $t+1$. The positive sequence of real dividends $(d_{t})$ is exogenous. 

Regarding the pure bubble asset (or fiat money), if the consumer buys $a_t$ units of this asset with the price $p_t$ on date $t+1$, then he(she) will resell this asset with the price $p_{t+1}$ on date $t+1$ to receive $p_{t+1}a_t$ units of consumption good. As in the traditional literature \citep{tirole85},  the only reason why people buy this asset is to be able to resell it in the future.

Households born at date $t\geq 0$ choose consumptions $c_{t}^{y},c_{t+1}^{o}$, investment in physical capital $s_{t}$, investment in a long-lived asset $a_{t}$ (Lucas' tree) and pure bubble asset $b_t$ in order to maximize her intertemporal utility 
$u(c_{t}^{y})+\beta u(c_{t}^{o})$ subject to the following constraints
\begin{subequations}
\begin{align}
\label{budget1}& c_{t}^{y}+s_{t}+q_{t}a_{t}+p_tb_t\leq e_{t}^{y}+w_{t},\\
\label{budget2}& c_{t+1}^{o}\leq e_{t+1}^{o}+(1-\delta+r_{t+1})s_{t}+(q_{t+1}+d_{t+1})a_{t}+p_{t+1}b_t, \\
& s_{t},a_{t},b_t, c_{t}^{y},c_{t}^{o}\geq 0, \notag
\end{align}
\end{subequations}
where $\delta \in [0,1]$ is the depreciation rate of physical capital.

Households born at date $-1$ just consume, that is $c^o_0=e^o_0+(1-\delta+r_0)s_{-1}+(q_0+d_0)a_{-1}+p_0b_{-1}$, where $s_{-1},a_{-1},b_{-1}$ are exogenous.

Denote $R_t\equiv 1-\delta+r_t.$ Let us provide a formal definition of equilibrium.

\begin{definition}\label{def_equilibrium} 
An intertemporal equilibrium of the two-period OLG economy is a non-negative  list $(s_{t},a_{t}, b_t, c_{t}^{y},c_{t}^{o}, K_t, L_t, w_t, R_t,q_t,p_t)_{t\geq 0}$ satisfying three conditions: 
(1) given $R_{t+1}, q_t, q_{t+1}, p_t, p_{t+1}$ and $w_t$, the list $(s_{t},a_{t}, b_t, c_{t}^{y},c_{t}^{o})_{t\geq 0}$ is a solution to the household's problem and the couple $(K_t,L_t)$ is a solution to the firm's problem,  (2) markets clear: $L_t=1,K_{t+1}=s_t$, $a_t=1$, $b_t=1$ and $c^y_{t}+c^o_{t}+s_{t}=F(K_{t},1)+(1-\delta)K_t+e^y_{t}+e^o_{t}+d_{t}$, and (3) $w_t>0,R_t>0,q_t>0,p_t\geq 0$ $\forall t\geq 0$.

\end{definition}

Our framework encompasses both exchange and production economies. Indeed, a pure exchange economy corresponds to the special case of our model with $\delta = 1$, $s_{-1}=0$, and $F(K,L)=0$ for all $(K,L)\in \mathbb{R}_+^2$ (in this case, we remove variables $s_t,K_t,L_t,w_t,R_t$ and conditions $L_t=1,K_{t+1}=s_t,w_t>0,R_t>0$ from Definition \ref{def_equilibrium}).  Moreover, when $e^y_t = e^o_t = 0$ for all $t$, our framework reduces to the standard overlapping generations model with production and a constant population.

Standard assumptions are required.
\begin{assum}[utility function]
\label{Assumption1} 
The utility function $u: \rr_+\to \rr\cup \{-\infty\}$ is concave, strictly increasing, continuously differentiable and  $u'(0)=+\infty$.
\end{assum}

\begin{assum}[production function]\label{Assumption1_production}
The production function $F: \rr_+^{2}\to\rr_+$ is assumed to be constant return to scale (CRS), concave, increasing in each component, continuously differentiable on $(0,\infty)^2$. The function  $f: \rr_+\to \rr_+$, defined by $f(k)\equiv F(k,1)$ $\forall k\geq 0$, is concave, strictly increasing,  continuously differentiable, $f(0)=0$. The depreciation rate $\delta\in [0,1]$.
\end{assum}
\begin{assum}[dividends and endowments]\label{Assumption1_dividend}Dividends and endowments satisfy $ 0 < d_t <\infty, e_{t}^{y}\geq 0, e_{t}^{o}\geq 0$ $\forall t$, and $a_{-1}=1, b_{-1}=1$, $s_{-1}\geq 0$, $(s_{-1},e^y_0)\not=(0,0)$. When we consider the exchange economy, we assume that $e^y_t>0$ $\forall t$. 
\end{assum}

To simplify our exposition, let us denote the two-period OLG economy by $\ee_{OLG}\equiv \ee_{OLG}(u,\beta,(e^y_t,e^o_t)_t,f,\delta,(d_t)_t).$

Let us focus on interior equilibria in the sense that $K_t>0 $ $ \forall t$ (this is ensured, for instance, by the Inada condition $f'(0)=+\infty$). In equilibrium, we also have $L_t=1>0$. By consequence,  the first order conditions (FOC) of the firm's problem give 
\begin{align} \label{focfirms}w_{t}=f(K_{t})-K_{t}f^{\prime }(K_{t}) \text{ and } r_{t}=f^{\prime }(K_{t}).
\end{align}
Since  $a_t,b_t>0$ and $s_{t}=K_{t+1}>0$ in any interior equilibrium, we have the following FOCs of households:
\begin{subequations}\label{fochouseholds}
\begin{align}\label{focyo}
u'(c^y_t)&=\beta R_{t+1}u'(c^o_{t+1}),\\
\label{b1}q_{t}R_{t+1}&=q_{t+1}+d_{t+1},\\ 
\label{bd1}p_{t}R_{t+1}&=p_{t+1}.
\end{align}
\end{subequations}
Note also that under conditions  (\ref{b1}), (\ref{bd1}) and $e^y_t+w_t>0$, the list $(s_{t},a_{t}, b_t, c_{t}^{y},c_{t}^{o})_{t\geq 0}$ is a solution to the household's maximization problem if (i) $a_t=b_t=1$, $s_t>0$, (ii) condition (\ref{fochouseholds}) holds, and (iii) budget constraints (\ref{budget1}), (\ref{budget2}) bind.

By using market clearing conditions $K_{t+1}=s_{t},  L_{t}=1,  a_{t}=1, b_t=1$, the FOC (\ref{focyo}) can be rewritten as
\begin{align}\label{d1}u'(e_{t}^{y}+w_{t}-K_{t+1}-q_{t}-p_t)&=\beta R_{t+1}u'\big(e_{t+1}^{o}+R_{t+1}(K_{t+1}+q_t+p_t)\big).
\end{align}

To summarize our above arguments, we state the following result.

\begin{lemma}\label{equi-olg} Let Assumptions \ref{Assumption1}-\ref{Assumption1_dividend} be satisfied. 
A non-negative  list\\
$(s_{t},a_{t}, b_t, c_{t}^{y},c_{t}^{o}, K_t, L_t, w_t, R_t,q_t,p_t)_{t\geq 0}$ is an interior intertemporal equilibrium of the OLG economy if  and only if (1) conditions (\ref{focfirms}), (\ref{b1}), (\ref{bd1}), (\ref{d1})  and  market clearing conditions in Definition \ref{def_equilibrium} are satisfied, (2) the budget constraints (\ref{budget1}) and (\ref{budget2}) bind and $K_{t}>0$ $\forall t\geq 0$. 

\end{lemma}
According to Lemma \ref{equi-olg}, an interior equilibrium can be uniquely determined via the sequence $(q_t,p_t,K_{t+1})_{t\geq 0}$. So, we also call $(q_t,p_t,K_{t+1})_{t\geq 0}$  an equilibrium. When we consider an exchange economy, we call  $(q_t,p_t)_{t\geq 0}$ an equilibrium (in this case, we define $R_{t+1}$ by \eqref{b1} instead of $R_{t+1}=1-\delta +r_{t+1}$).

%

\subsection{A general equilibrium model with infinitely-lived agents} \label{GE-1}
We now develop the model in \citet{LeVanPham2016} by adding two ingredients: endowments and a pure bubble asset, allowing us to cover both exchange and production economies.  Consider an infinite-horizon general equilibrium model without uncertainty and discrete time ($t\in \{0,1,2,\ldots\}$). There are  $m$ heterogeneous households and a representative firm without market power. There is a single consumption good, which is the num\'eraire.

For each period $t$, the representative firm  takes prices $(r_t,w_t)$ as given and maximizes its profit by choosing physical capital  $K_t$ and labor $L_t$.
\begin{eqnarray} (P(r_t,w_t)): \quad &&\pi_t\equiv \ma_{ K_t,L_t\geq 0} \big(F(K_t,L_t)-r_tK_t-w_tL_t\big),
\end{eqnarray}
where the function $F$ satisfies Assumption \ref{Assumption1_production}.

 Each household $i$ has an endowment $e_{i,t}\geq 0$ units of consumption good and supplies $L_{i,t}\geq 0$ units of labor supply at each date $t$.\footnote{\cite{BeckerDubeyMitra2014} consider the case $L_{i,t}=1/m$.} 
 
Households invest in physical capital and/or financial assets and consume. In each period $t$, agent $i$ consumes $c_{i,t}$ units of consumption good. If agent $i$ buys $k_{i,t+1}\geq 0$ units of capital in period $t$, she will receive $(1-\delta)k_{i,t+1}$ units of old capital in period $t+1$, after being depreciated ($\delta$ is the depreciation rate), and $r_{t+1}k_{i,t+1}$ units of consumption good. 

As in the OLG model in Section \ref{OLG-1}, there are the so-called a pure bubble asset and a long-lived asset that brings dividends (Lucas' tree). Each household  $i$ takes the sequence $(q,p,r,w)\equiv (q_t,p_t,r_t,w_t)_{t\geq 0}$ as given and chooses the sequences of capital $(k_{i,t})_{t\geq 0}$, of the long-lived asset $(a_{i,t})_{t\geq 0}$, of fiat money $(b_{i,t})_{t\geq 0}$ and of consumption $(c_{i,t})_{t\geq 0}$  in order to maximize her intertemporal utility.
\begin{align} (P_i(q,p,r,w))&:\ma_{(c_{i,t},k_{i,t+1}, a_{i,t},b_{i,t})_{t=0}^{+\infty}} \Big[\summ_{t=0}^{+\infty} \beta_i^tu_i(c_{i,t})\Big]
\end{align}
subject to constraints $k_{i,t+1}, a_{i,t}, b_{i,t}\geq 0$,\footnote{We may eventually introduce a short-sale constraint as in \cite{LeVanPham2016}, \cite{BosiLevanPham2022} but it is not the main aim of the present paper.} and budget constraint
\begin{align} &c_{i,t}+k_{i,t+1}-(1-\delta)k_{i,t}+q_ta_{i,t}+p_tb_{i,t}\notag\\
&\leq r_tk_{i,t}+(q_t+d_t)a_{i,t-1}+p_tb_{i,t-1}+w_tL_{i,t}+e_{i,t}.
\end{align}

Denote $\ee_{GEILA}$ the economy characterized by a list $$\ee_{GEILA}=\Big((u_i,\beta_i,(e_{i,t},L_{i,t})_t,k_{i,0},a_{i,-1}, b_{i,-1})_{i=1}^m,f,(d_t)_{t},\delta\Big).$$

\begin{definition}\label{4.definition_equilibrium} A sequence of prices and quantities \\
$\big(\bar{q}_t,\bar{p}_t,\bar{r}_t, \bar{w}_t,(\bar{c}_{i,t},\bar{k}_{i,t+1}, \bar{a}_{i,t},\bar{b}_{i,t})^m_{i=1}, \bar{K}_t,\bar{L}_t\big)_{t\geq 0}$ is an intertemporal equilibrium of the economy $\ee_{GEILA}$ if the following conditions are satisfied.
(i) Price positivity: $\bar{w}_t,\bar{q}_t, \bar{r}_t>0, \bar{p}_t\geq 0$  $\forall t\geq 0$. (ii) Market clearing: $\bar{K}_t=\summ_{i=1}^m\bar{k}_{i,t}$, $\bar{L}_t=\sum_{i=1}^m L_{i,t}$, $\summ_{i=1}^m\bar{a}_{i,t}=1$,  $\summ_{i=1}^m\bar{b}_{i,t}=1$, and $$\summ_{i=1}^m(\bar{c}_{i,t}+\bar{k}_{i,t+1}-(1-\delta)\bar{k}_{i,t})=e_t+f(\bar{K}_t)+d_t \text{ } \forall t\geq 0, $$
where $e_t\equiv \sum_{i=1}^me_{i,t}$ is the aggregate endowment; 
  (iii) Optimal consumption plans: for all $i$, $(\bar{c}_{i,t},\bar{k}_{i,t+1},\bar{a}_{i,t},\bar{b}_{i,t})_{t=0}^{\infty}$ is a solution to the problem $(P_i(\bar{q},\bar{p},\bar{r}, \bar{w}))$. (iv) Optimal production plan: for all $t\geq  0$, $(\bar{K}_t,\bar{L}_t)$ is a solution to the problem $(P(\bar{r}_t,\bar{w}_t))$.

\end{definition}

We impose the standard assumptions on the households' characteristics.
\begin{assum}[endowments]
\label{Assumption2}
 (1)  $k_{i,0}, a_{i,-1},  b_{i,-1}, e_{i,t}, L_{i,t}\geq 0$, and $(k_{i,0}, a_{i,-1},e_{i,0})\not=(0,0,0)$  $\forall i\in \{1,\ldots,m\}$. Moreover,  $\sum_{i=1}^mL_{i,t}=1$, $\sum_{i=1}^m a_{i,-1}=1$, $\sum_{i=1}^m b_{i,-1}=1$. 


\end{assum}
\begin{assum}\label{Assumption3}For all $i$, $\beta_i\in (0,1)$ and the utility function $u_i:\rr_+\to \rr\cup\{-\infty\}$ satisfies  (1) $\summ_{t=0}^{\infty}\beta_i^tu_i(W_t)<\infty$, where $(W_t)_t$ is defined by $W_0\equiv f(K_0)+d_0+\summ_{i=1}^me_{i,0}$ and $W_t=f(W_{t-1})+d_t+\summ_{i=1}^me_{i,t} $ $\forall t\geq 1$ and 
(2)  there exist $\theta,x\in \rr$ such that $\frac{u_i(c)-u_i(\lambda c)}{1-\lambda}\leq \theta u_i(c) +x$ $\forall \lambda\in (\underline{\lambda}, 1), \forall c\in \{z: u_i(z)>-\infty\}$, where $\underline{\lambda}\in (0,1)$.
\end{assum}

Assumption \ref{Assumption3}.(2) is a variant of Assumption (ii) in Proposition 5.1 in \cite{EkelandScheinkman1986}, which plays an important role in proving transversality conditions. This assumption is satisfied under standard setups, for instance, $u_i(c)=c^{1-\sigma}/(1-\sigma)$, where $0<\sigma\not=1$ or $u_i(c)=\ln(c)$. It also holds when $u_i(0)>-\infty$. Indeed, by the concavity of $u_i$, we have $u_i(\lambda c)\geq \lambda u_i(c)+(1-\lambda)u_i(0)$, which implies  $\frac{u_i(c)-u_i(\lambda c)}{1-\lambda}\leq u_i(c)-u_i(0)$ $\forall \lambda \in (0,1)$. Then, we take $\theta=1$ and $x=-u_i(0)$.

\begin{remark}Under Assumptions  \ref{Assumption1_production} and \ref{Assumption2}, we have $L_t=1$, $r_t=f'(K_t)$ and $w_t= f(K_t)-f'(K_t)K_t$ in equilibrium. Hence, we also call 
$\big(q_t, p_t, (c_{i,t}, k_{i,t+1},b_{i,t}, a_{i,t}) _{i\in I}, K_t  \big)_{t\geq 0}$ an intertemporal equilibrium. \end{remark}



We now introduce the notion of two-cycle economy and two-cycle equilibrium.
\begin{definition}[two-cycle economy]\label{definition_2cycle_economy}
The economy $\ee$ is called a two-cycle economy if (1) there are $2$ consumers, called $1$ and $2$,\footnote{Some papers name {\it odd} and {\it even} agents.} with $u_i=u$, $
\beta_i=\beta\in(0,1) $ $  \forall i=\{1,2\}$, (2) their endowments are  $k_{1,0}= 0, a_{1, -1} =0, b_{1, -1} =0, k_{2,0}\geq 0, a_{2, -1} =1, b_{2, -1} =1$, 
and (3) their labor endowments are $ L_{1,2t}= 1,  L_{1,2t+1}= 0,
L_{2,2t}= 0,  L_{2,2t+1}= 1$ $\forall t$.
\end{definition}

Denote this two-cycle economy by $\ee_{GEILA2}\equiv \ee_{GEILA2}(u,\beta,(e_{i,t})_t,f,\delta,(d_t)_t).$

\begin{definition}
An intertemporal equilibrium  
$\big(q_t, p_t, (c_{i,t}, k_{i,t+1},b_{i,t}, a_{i,t} ) _{i\in I}, K_t\big)_{t\geq 0}$ is called a two-cycle  equilibrium of the economy $\ee_{GEILA2}$ if 
\begin{subequations}\label{kallocation}
\begin{align}
 \label{7}k_{1,2t}&=a_{1,2t-1}=  b_{1,2t-1}=0, &
k_{1,2t+1}&=K_{2t+1}, a_{1,2t}=b_{1,2t}=1,\\
\label{8} k_{2,2t}&=K_{2t}, a_{2,2t-1}= b_{2,2t-1}=1,& 
k_{2,2t+1}&= a_{2,2t}= b_{2,2t}=0.
\end{align}
\end{subequations}
\end{definition}
Observe that in a two-cycle equilibrium, we have
\begin{subequations}\label{callocation}
\begin{align}
\label{9}c_{1,2t-1}&=e_{1,2t-1}+R_{2t-1}K_{2t-1}+q_{2t-1}+d_{2t-1}+p_{2t-1},\\
c_{1,2t}&=e_{1, 2t}+w_{2t}-K_{2t+1}-q_{2t}-p_{2t},\\
\label{10}c_{2,2t-1}&=e_{2, 2t-1}+w_{2t-1}-K_{2t}-q_{2t-1}-p_{2t-1},\\
c_{2,2t}&=e_{2,2t}+R_{2t}K_{2t}+q_{2t}+d_{2t}+p_{2t},
\end{align}\end{subequations}
where  denote $R_t\equiv r_t+1-\delta$ as in Section \ref{OLG-1}.

We have the following key result characterizing the two-cycle equilibrium.
\begin{proposition}\label{lemma-2cycle} Consider a two-cycle economy  $\ee_{GEILA2}\equiv \ee_{GEILA2}(u,\beta,(e_{i,t})_t,f,\delta,(d_t)_t).$ Let Assumptions  \ref{Assumption1}-\ref{Assumption2}  be satisfied.  Denote 
\begin{align}\label{endowment_change}
e^o_{2t}&\equiv e_{2,2t},&  e^o_{2t+1}&\equiv e_{1,2t+1},& \quad  e^y_{2t}&\equiv e_{1,2t},&  e^y_{2t+1}&\equiv e_{2,2t+1} \text{ }\forall t.
\end{align}
Let $E_t\equiv \big(q_t, p_t,  \left (c_{i,t}, k_{i,t+1},b_{i,t}, a_{i,t} \right ) _{i\in \{1,2\}}, K_t  \big) _{t\geq 0}$ be a positive list satisfying (\ref{kallocation}) and (\ref{callocation}).
\begin{enumerate}
   
\item\label{propo1_necessary}  If $E_t$ is a two-cycle equilibrium of the economy $\ee_{GEILA2}$, then, for any $t$,
\begin{subequations}
\begin{align}
\label{lemma2-10}& q_{t}R_{t+1} =\left(q_{t+1}+d
_{t+1}\right), \quad p_{t}R_{t+1}=p_{t+1},\\
\label{lemma2-20}&\frac{1}{R_{t+1}}=\frac{\beta u'(e^o_{t+1}+R_{t+1}K_{t+1}+q_{t+1}+d_{t+1}+p_{t+1}) }{ u'(e^y_t+w_{t}-K_{t+1}-q_{t}-p_t) },\\
\label{lemma2-30} &\frac{1}{R_{t+1}}\geq  \frac{\beta u'(e^y_{t+1}+w_{t+1}-K_{t+2}-q_{t+1}-p_{t+1})}{u'(e^o_{t}+R_{t}K_{t}+q_{t}+d_{t}+p_t)}.
\end{align}\end{subequations}
If we require, in addition, Assumption \ref{Assumption3} and $\sum_{t=0}^{\infty}\beta^t|u(c_{i,t})|< \infty$ $\forall i\in \{1,2\}$, then the following transversality conditions hold.
\begin{subequations}
\begin{align}
\label{lemma2-40}&\limm_{t\rightarrow \infty}\beta^{2t}u'(e^y_{2t}+w_{2t}-K_{2t+1}-q_{2t}-p_{2t})(K_{2t+1}+q_{2t}+p_{2t})=0,\\
\label{lemma2-50}&\limm_{t\rightarrow \infty}\beta^{2t-1}u'(e^y_{2t-1}+w_{2t-1}-K_{2t}-q_{2t-1}-p_{2t-1})(K_{2t}+q_{2t-1}+p_{2t-1})=0.
\end{align}\end{subequations}

\item \label{propo1_sufficient} 
 $E_t$ is a two-cycle equilibrium of the economy $\ee_{GEILA2}$ if FOCs (\ref{lemma2-10}-\ref{lemma2-30}) and TVCs (\ref{lemma2-40}-\ref{lemma2-50}) hold, and $\sum_{t=0}^{\infty}\beta^{t}u(c_{i,t})\in (-\infty,\infty)$ $\forall i\in \{1,2\}$.
 
\end{enumerate}
\end{proposition}
\begin{proof}See Appendix \ref{twomodels_proof}.\end{proof}
Conditions (\ref{lemma2-10}-\ref{lemma2-30}) are first-order conditions while  (\ref{lemma2-40}-\ref{lemma2-50}) are transversality conditions. These conditions ensure that our positive list constitutes a two-cycle equilibrium. It should be noticed that we allow for $u(0)=-\infty$ and $u(c)$ may be negative.


\section{Relationship between GEILA vs OLG models}
\label{sectionmain}
We now present  our main result which shows the connection between the equilibrium in an OLG model and that in a two-cycle economy.

\begin{theorem}\label{main} Let
$\big((u_i,\beta_i,(e_{i,t},L_{i,t})_t,k_{i,0},a_{i,-1}, b_{i,-1})_{i=1}^m,(e_{t}^y,e^o_t)_t,f,\delta,(d_t)_{t}\big)$  be a list of fundamentals satisfying Assumptions  \ref{Assumption1}-\ref{Assumption2}.

\begin{enumerate}
\item\label{main_part1} (GEILA $\Rightarrow $ OLG) If $(q_t, p_t, \left (c_{i,t}, k_{i,t+1}, a_{i,t}, b_{i,t} \right ) _{i\in \{1,2\}}, K_t ) _{t\geq 0}$  is a two-cycle  equilibrium of the  economy $\ee_{GEILA2}\equiv \ee_{GEILA2}(u,\beta,(e_{i,t})_t,f,\delta,(d_t)_t)$, then the sequence $(K_{t+1},q_t,p_t)_{t\geq 0}$ is an equilibrium of the OLG economy $\ee_{OLG}\equiv \ee_{OLG}(u,\beta,(e^y_t,e^o_t)_t,f,\delta,(d_t)_t),$ where the sequence $(e^y_t,e^o_t)_{t\geq 0}$ is defined by (\ref{endowment_change}).

\item\label{main_part2} (OLG $\Rightarrow $ GEILA) Assume that a positive sequence $(q_t,p_t,K_{t+1})_{t\geq 0}$ is an equilibrium of the two-period OLG economy $\ee_{OLG}\equiv \ee_{OLG}(u,\beta,(e^y_t,e^o_t)_t,f,\delta,(d_t)_t)$. Consider a list $E_t\equiv (q_t, p_t, (c_{i,t}, k_{i,t+1}, a_{i,t}, {b}_{i,t}) _{i\in \{1,2\}}, K_t  )_{t\geq 0}$ where  $(c_{i,t}, k_{i,t+1}, a_{i,t}, b_{i,t}) _{i\in \{1,2\}}$ satisfy (\ref{kallocation}) and  (\ref{callocation}) for all $t$. 


{2.(a)} $E_t$ is a two-cycle equilibrium of the  economy $\ee_{GEILA2}\equiv \ee_{GEILA2}(u,\beta,(e_{i,t})_t,f,\delta,(d_t)_t)$, where endowments $(e_{i,t})_{t\geq 0}$ are defined by (\ref{endowment_change}), if  the following conditions hold.
\begin{subequations}
\begin{align}
&\sum_{t=0}^{\infty}\beta^{t}u(c_{i,t})\in (-\infty,\infty) \text{ } \forall i\in \{1,2\},\label{finite_utility}\\
\label{add1}&\frac{1}{R_{t+1}} \geq   \frac{\beta u'(e^y_{t+1}+w_{t+1}-K_{t+2}-q_{t+1}-p_{t+1})}{u'(e^o_t+R_{t}K_{t}+q_{t}+d_{t}+p_t)} \text{ }\forall t,\\
\label{add2}&\limm_{t\rightarrow \infty}\beta^{2t}u'(c_{1,2t})(K_{2t+1}+q_{2t}+p_{2t})=0,\\
\label{add3}&\limm_{t\rightarrow \infty}\beta^{2t-1}u'(c_{2,2t-1})(K_{2t}+q_{2t-1}+p_{2t-1})=0.
\end{align}
\end{subequations}
{2.(b)} Conversely, if $E_t$ is a two-cycle equilibrium of the  economy $\ee_{GEILA2}$, then (\ref{finite_utility}) and (\ref{add1}) hold. Moreover, if we require, in addition, $\sum_{t=0}^{\infty}\beta^t|u(c_{i,t})|< \infty$ $\forall i\in \{1,2\}$ and Assumption \ref{Assumption3},  then the transversality conditions (\ref{add2}) and (\ref{add3}) hold.

\end{enumerate}
\end{theorem}
\begin{proof}
Part 1 is a consequence of Lemma \ref{equi-olg} and Proposition \ref{lemma-2cycle}'s point \ref{propo1_necessary}. Part 2 is a consequence of Lemma \ref{equi-olg} and Proposition \ref{lemma-2cycle}'s point \ref{propo1_sufficient}. The last statement of Theorem \ref{main} follows Proposition \ref{lemma-2cycle}'s point \ref{propo1_necessary} and the transversality conditions (\ref{lemma2-40}-\ref{lemma2-50}).
\end{proof}
The intuition behind this result is the two-cycle structure of the economy $\ee_{GEILA2}$ with infinite-lived agents, which resembles the structure of the OLG economy $\ee_{OLG}$ with two-period-lived agents.

Our result leads to interesting implications. First, point \ref{propo1_necessary} shows that analyzing two-cycle equilibria requires us to understand the properties of equilibrium in a two-period OLG model.  Second, point \ref{propo1_sufficient} provides a way to construct a two-cycle equilibria from an equilibrium in a two-period OLG model.  However, we need to impose additional conditions (\ref{add1}-\ref{add3}) which are satisfied in many standard setups.

Now, let us focus on two particular cases: a pure exchange economy (i.e., there is no production) and a production economy (i.e., $e^y_t=e^o_t=e_{i,t}=0$ $\forall i,\forall t$). By applying  parts 1  and 2.(a) of Theorem \ref{main} for each of these two cases, we obtain the following results.
\begin{proposition}[exchange economy]\label{main-exchange}Let
$\big((u_i,\beta_i,(e_{i,t},a_{i,-1}, b_{i,-1})_{i=1}^m,(e_{t}^y,e^o_t)_t,(d_t)_{t}\big)$  be a list of fundamentals satisfying Assumptions  \ref{Assumption1}-\ref{Assumption2}.

\begin{enumerate}
\item (GEILA $\Rightarrow $ OLG) If $(q_t, p_t, \left (c_{i,t}, a_{i,t}, b_{i,t} \right ) _{i\in \{1,2\}}  ) _{t\geq 0}$  is a two-cycle  equilibrium of the  economy  $\ee_{GEILA2}\equiv \ee_{GEILA2}(u,\beta,(e_{1,t}, e_{2,t})_t,(d_t)_t)$, then the sequence $(q_t,p_t)_{t\geq 0}$ is an equilibrium of the OLG economy $\ee_{OLG}\equiv \ee_{OLG}(u,\beta,(e^y_t,e^o_t)_t,(d_t))$, where $(e^y_t,e^o_t)_{t\geq 0}$ is defined by (\ref{endowment_change}).

\item (OLG $\Rightarrow $ GEILA) Assume that the positive sequence $(q_t,p_t)_{t\geq 0}$ is an equilibrium of the two-period OLG economy  $\ee_{OLG}\equiv \ee_{OLG}(u,\beta,(e^y_t,e^o_t)_t,(d_t))$.\\
A list  
$(q_t, p_t,  (c_{i,t}, a_{i,t},b_{i,t}) _{i=1,2}) _{t\geq 0}$, where $ ((c_{i,t}, a_{i,t}, b_{i,t}) _{i\in \{1,2\}})_{t\geq 0}$ is given by   (\ref{kallocation}) and  (\ref{callocation}),  is a two-cycle equilibrium of the  economy $\ee_{GEILA2}\equiv \ee_{GEILA2}(u,\beta,(e_{1,t}, e_{2,t})_t,(d_t)_t)$,  where endowments $(e_{i,t})_{t\geq 0}$ are defined by (\ref{endowment_change}), if  $\sum_{t=0}^{\infty}\beta^{t}u_i(c_{i,t})\in (-\infty,\infty)$ $\forall i\in \{1,2\}$  and 
\begin{subequations}
\begin{align}
\label{add12}&u'(e^o_t+q_{t}+d_{t}+p_t) \geq   \beta R_{t+1} u'(e^y_{t+1}-q_{t+1}-p_{t+1}) \text{ }\forall t,\\
\label{add22}&\limm_{t\rightarrow \infty}\beta^{2t}u'(e^y_{2t}-q_{2t}-p_{2t})(q_{2t}+p_{2t})=0,\\
\label{add32}&\limm_{t\rightarrow \infty}\beta^{2t-1}u'(e^y_{2t-1}-q_{2t-1}-p_{2t-1})(q_{2t-1}+p_{2t-1})=0.
\end{align}
\end{subequations}
\end{enumerate}
\end{proposition}

\begin{proposition}[production economy]\label{main-production} Let
$\big((u_i,\beta_i,(L_{i,t})_t,k_{i,0},a_{i,-1}, b_{i,-1})_{i=1}^m,f,\delta,(d_t)_{t}\big)$  be a list of fundamentals satisfying Assumptions  \ref{Assumption1}-\ref{Assumption2}.
\begin{enumerate}
\item (GEILA $\Rightarrow $ OLG) If $(q_t, p_t, \left (c_{i,t}, k_{i,t+1}, a_{i,t}, b_{i,t} \right ) _{i\in \{1,2\}}, K_t ) _{t\geq 0}$  is a two-cycle  equilibrium of the  economy economy $\ee_{GEILA2}\equiv \ee_{GEILA2}(u,\beta,f,\delta,(d_t)_t)$, then the sequence $(K_{t+1},q_t,p_t)_{t\geq 0}$ is an equilibrium of the OLG economy $\ee_{OLG}\equiv \ee_{OLG}(u,\beta,f,\delta,(d_t)_t)$.

\item (OLG $\Rightarrow $ GEILA) Assume that the positive sequence $(q_t,p_t,K_{t+1})_{t\geq 0}$ is an equilibrium of the two-period OLG economy $\ee_{OLG}\equiv \ee_{OLG}(u,\beta,f,\delta,(d_t)_t)$.   A list\\
$(q_t, p_t,  (c_{i,t}, k_{i,t+1}, a_{i,t},b_{i,t}) _{i\in\{1,2\}}, K_t) _{t \geq 0},$ 
where  $ (c_{i,t}, k_{i,t+1}, a_{i,t}, b_{i,t}) _{i\in\{1,2\}}$ is determined by  (\ref{kallocation}) and  (\ref{callocation}), is a two-cycle equilibrium of the  economy $\ee_{GEILA2}\equiv \ee_{GEILA2}(u,\beta,f,\delta,(d_t)_t)$ if  $\sum_{t=0}^{\infty}\beta^{t}u_i(c_{i,t})\in (-\infty,\infty)$ $\forall i\in \{1,2\}$ and 
\begin{subequations}
\begin{align}
\label{add12}&u'(R_{t}K_{t}+q_{t}+d_{t}+p_t) \geq   \beta R_{t+1} u'(w_{t+1}-K_{t+2}-q_{t+1}-p_{t+1}) \text{ }\forall t,\\
\label{add22}&\limm_{t\rightarrow \infty}\beta^{2t}u'(w_{2t}-K_{2t+1}-q_{2t}-p_{2t})(K_{2t+1}+q_{2t}+p_{2t})=0,\\
\label{add32}&\limm_{t\rightarrow \infty}\beta^{2t-1}u'(w_{2t-1}-K_{2t}-q_{2t-1}-p_{2t-1})(K_{2t}+q_{2t-1}+p_{2t-1})=0.
\end{align}
\end{subequations}
\end{enumerate}
\end{proposition}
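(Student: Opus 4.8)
The plan is to derive Proposition \ref{main-production} directly from Theorem \ref{main} by specializing to the case of vanishing endowments. I would set $e_{i,t}=0$ for both agents $i\in\{1,2\}$ and all dates $t$; by the relabelling (\ref{endowment_change}) this is equivalent to $e^y_t=e^o_t=0$ for all $t$. With this choice the production economy $\ee_{GEILA2}(u,\beta,f(\cdot),\delta,(d_t)_t)$ and the OLG economy $\ee_{OLG}(u,\beta,f(\cdot),\delta,(d_t)_t)$ are precisely the economies $\ee_{GEILA2}(u,\beta,(e_{i,t})_t,f(\cdot),\delta,(d_t)_t)$ and $\ee_{OLG}(u,\beta,f(\cdot),\delta,(d_t)_t,(e^y_t,e^o_t)_t)$ of Theorem \ref{main} evaluated at the zero endowment stream, so the two statements are literally the same assertion read at a single point of the parameter space. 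The construction $r_t=f'(K_t)$ and the allocations given by (\ref{7}, \ref{8}, \ref{9}, \ref{10}) are carried over unchanged.

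For part 1 (GEILA $\Rightarrow$ OLG), I would simply invoke part 1 of Theorem \ref{main}: a two-cycle equilibrium of the endowment-free economy maps to an equilibrium $(K_{t+1},q_t,p_t)_{t\ge 0}$ of the endowment-free OLG economy, since the defining relations of Proposition \ref{lemma-2cycle} and Definition \ref{equi-olg} are unchanged once the (now zero) endowment terms are dropped. For part 2 (OLG $\Rightarrow$ GEILA), I would substitute $e^y_t=e^o_t=0$ into the additional conditions (\ref{add1})--(\ref{add3}) of Theorem \ref{main}: the no-arbitrage inequality (\ref{add1}) becomes $u'(R_tK_t+q_t+d_t+p_t)\ge \beta R_{t+1}u'(w_{t+1}-K_{t+2}-q_{t+1}-p_{t+1})$, and the two transversality conditions (\ref{add2})--(\ref{add3}) lose their endowment arguments as well, with $w_t\equiv f(K_t)-K_tf'(K_t)$ unchanged. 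These are exactly the conditions displayed in the proposition, and since Theorem \ref{main} states that they are necessary and sufficient for the constructed list to be a two-cycle equilibrium, the equivalence transfers verbatim.

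There is essentially no analytic obstacle here, as the argument is a pure specialization; the only points warranting a check are the boundary hypotheses once endowments are removed. With $e^y_0=0$, the requirement $(k_0,e^y_0)\neq(0,0)$ of Definition \ref{equi-olg} reduces to $k_0>0$, which is guaranteed by the standing assumption $K_0>0$ for $\ee_{GEILA2}$; and the Inada condition $u'(0)=+\infty$ ensures that the consumption levels entering (\ref{add1})--(\ref{add3}) remain strictly positive, so that $u'$ is always evaluated in the interior of its domain even after the endowments are dropped. With these verifications the proposition follows immediately from Theorem \ref{main}.
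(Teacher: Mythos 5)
Your proof is correct and matches the paper's (implicit) argument: the paper states Proposition \ref{main-production} without a separate proof, treating it exactly as you do --- the special case of Theorem \ref{main} with $e_{i,t}=0$ for all $i,t$ (equivalently $e^y_t=e^o_t=0$ via (\ref{endowment_change})), under which conditions (\ref{add1})--(\ref{add3}) reduce verbatim to (\ref{add12})--(\ref{add32}). Your checks of the boundary hypothesis $(k_0,e^y_0)\neq(0,0)$ and of interiority via the Inada condition are appropriate and complete the specialization.
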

\section{Applications: indeterminacy and asset price bubbles}
\label{appli}
In this section, we present some applications of our results for studying the issue of indeterminacy and asset price bubble. First, we provide a formal definition of asset price bubble  \citep{tirole82, tirole85, Kocherlakota1992, SantosWoodford1997,HuangWerner2000, LeVanPham2016}. Assume that we have an asset pricing equation 
\begin{align}
\label{14}
q_{t}=\frac{q_{t+1}+d
_{t+1}}{R_{t+1}}.
\end{align}
Solving recursively (\ref{14}), we obtain an asset price decomposition in
two parts
\begin{equation}\label{decomposition}
q_{t}=Q_{t,t+\tau }q_{t+\tau }+\sum_{s=1}^{\tau }Q_{t,t+s}d_{t+s},%
\text{ where }Q_{t,t+s}\equiv \dfrac{1}{R_{t+1}\ldots R_{t+s}}
\end{equation}%
is the discount factor of the economy from date $t$ to $t+s$.

\begin{definition}
\label{60}

\begin{enumerate}
\item The fundamental value of $1$ unit of asset at date $t$ is the sum of
discounted values of future dividends:
\begin{align}\label{FVdef}
FV_{t}\equiv \sum_{s=1}^{\infty }Q_{t,t+s}d_{t+s}.
\end{align}

\item We say that there is a bubble at date $t$ if $q_{t}>FV_{t}$.

\item When $d_{t}=0$ for any $t\geq 0$ (the Fundamental Value is
zero), we say that there is a pure bubble  if $q_t>0$ for any $t$ (or the fiat money's price is strictly positive).
\end{enumerate}
\end{definition}
\begin{lemma}[\cite{Montrucchio2004}, Proposition 7]\label{mont}
Consider the case $d_t>0$ $\forall t\geq 0$.  There is a bubble if and only if $\sum_{t=1}^{\infty}\frac{d_t}{q_t}<\infty$.
\end{lemma}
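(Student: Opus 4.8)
The plan is to express the bubble component as an infinite product and then invoke the classical criterion relating infinite products to series. First I would define the bubble component at date $t$ by $B_t \equiv q_t - FV_t$. From the two-part decomposition displayed just before Definition \ref{60}, namely $q_t = Q_{t,t+\tau} q_{t+\tau} + \sum_{s=1}^{\tau} Q_{t,t+s} d_{t+s}$, the partial sums $\sum_{s=1}^{\tau} Q_{t,t+s} d_{t+s}$ are nondecreasing in $\tau$ and bounded above by $q_t$ (because $Q_{t,t+\tau}q_{t+\tau}\ge 0$); hence $FV_t$ is finite and $B_t = \lim_{\tau \to \infty} Q_{t,t+\tau} q_{t+\tau}$ exists as a nonnegative limit. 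By definition there is a bubble at date $t$ precisely when $B_t > 0$. Note that $R_{s+1}>0$, $q_s>0$, and $d_s>0$ throughout, so all the manipulations below are legitimate.

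The key step is a telescoping identity. Using the asset pricing equation $q_s R_{s+1} = q_{s+1} + d_{s+1}$ (equation \ref{14}), I would write $\frac{q_{s+1}}{q_s R_{s+1}} = \frac{q_{s+1}}{q_{s+1} + d_{s+1}} = \big(1 + \tfrac{d_{s+1}}{q_{s+1}}\big)^{-1}$, and then observe that the $q$-factors telescope:
\[
Q_{t,t+\tau} q_{t+\tau} = \frac{q_{t+\tau}}{R_{t+1} \cdots R_{t+\tau}} = q_t \prod_{s=t}^{t+\tau-1} \frac{q_{s+1}}{q_s R_{s+1}} = q_t \prod_{s=t+1}^{t+\tau} \Big(1 + \frac{d_s}{q_s}\Big)^{-1}.
\]
Passing to the limit $\tau \to \infty$ yields $B_t = q_t \prod_{s=t+1}^{\infty} \big(1 + \tfrac{d_s}{q_s}\big)^{-1}$.

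Finally I would apply the standard fact that, for nonnegative terms $a_s \geq 0$, the infinite product $\prod_s (1 + a_s)^{-1}$ converges to a strictly positive limit if and only if $\sum_s a_s < \infty$ (and decreases to $0$ otherwise). With $a_s = d_s/q_s > 0$ this gives $B_t > 0 \iff \sum_{s > t} \frac{d_s}{q_s} < \infty$. Since each term $d_s/q_s$ is finite, including or dropping the finitely many terms with $s \le t$ does not affect convergence, so the tail condition is equivalent to $\sum_{s=1}^{\infty} \frac{d_s}{q_s} < \infty$; in particular the presence of a bubble is independent of the date $t$. The argument is essentially self-contained, and I do not anticipate a real obstacle: the only nonroutine ingredient is the infinite-product criterion, which itself follows from comparing $\log(1+a_s)$ with $a_s$ for $a_s\ge 0$.
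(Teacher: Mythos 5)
Your proof is correct. Note that the paper itself offers no proof of this lemma: it is stated as a cited result from Montrucchio (2004), and the only related material in the text is the remark immediately afterwards that $q_t = FV_t + \lim_{\tau\to\infty}Q_{t,t+\tau}q_{t+\tau}$ and that the bubble, once present, persists. Your argument --- telescoping $Q_{t,t+\tau}q_{t+\tau}$ into $q_t\prod_{s=t+1}^{t+\tau}\bigl(1+\tfrac{d_s}{q_s}\bigr)^{-1}$ via the pricing equation and then invoking the equivalence between convergence of $\prod_s(1+a_s)$ and of $\sum_s a_s$ for $a_s\ge 0$ --- is exactly the standard derivation behind the cited criterion, and all the positivity hypotheses you use ($q_t>0$, $R_t>0$, $d_t>0$) are guaranteed by the equilibrium definition and the lemma's assumption. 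The observation that the tail condition $\sum_{s>t}d_s/q_s<\infty$ is equivalent to $\sum_{s=1}^{\infty}d_s/q_s<\infty$, so that the presence of a bubble is date-independent, correctly recovers the paper's subsequent remark as a by-product. No gaps.
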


Letting $\tau$ in (\ref{decomposition}) tend to infinity and  using (\ref{FVdef}), we obtain $q_{t}=FV_{t}+\lim_{\tau \rightarrow \infty }Q_{t,t+\tau}q_{t+\tau }$. Thus, $q_{t}-FV_{t}>0$ if and only if $q_{0}-FV_{0}>0$. Therefore, if a
bubble exists at date $0$, it exists forever. Moreover, we also see that 
 ${q_{t+1}-FV_{t+1}}=R_{t+1}(q_{t}-FV_{t})$ $\forall t$.

We now apply our results in Section  \ref{sectionmain} to study the issue of rational asset prices and equilibrium indeterminacy. 
\subsection{Exchange economy}
First, we focus on the exchange economy. Let us define the sequence $(R_t)_{t\geq 1}$ by
\begin{align}
     \label{lemma2-2}\frac{1}{R_{t+1}}\equiv &\frac{\beta u'(e^o_{t+1}+q_{t+1}+d_{t+1}+p_{t+1}) }{ u'(e^y_t-q_{t}-p_t) } \text{ } \forall t\geq 0,
\end{align}
and summarize our equilibrium system in Proposition \ref{main-exchange} as follows:
\begin{subequations}
\begin{align}
\label{lemma2-1} q_{t}R_{t+1} &=\left(q_{t+1}+d
_{t+1}\right), \quad p_{t}R_{t+1}=p_{t+1}\\
\label{lemma2-3}\frac{1}{R_{t+1}}&\geq  \frac{\beta u'(e^y_{t+1}-q_{t+1}-p_{t+1})}{u'(e^o_{t}+q_{t}+d_{t}+p_t)},\\
\label{lemma2-4}&\limm_{t\rightarrow \infty}\beta^{2t}u'(e^y_{2t}-q_{2t}-p_{2t})(q_{2t}+p_{2t})=0,\\
\label{lemma2-5}&\limm_{t\rightarrow \infty}\beta^{2t-1}u'(e^y_{2t-1}-q_{2t-1}-p_{2t-1})(q_{2t-1}+p_{2t-1})=0.
\end{align}
\end{subequations}
According to Proposition \ref{main-exchange},  condition (\ref{lemma2-1})  is used to characterize the intertemporal equilibrium in an OLG model. Moreover, all conditions (\ref{lemma2-1}-\ref{lemma2-5}) characterize the two-cycle  equilibrium of the  economy $\ee_{GEILA2}(u,\beta,(e_{1,t}, e_{2,t})_t,(d_t)_t)$. 

We will use the  system (\ref{lemma2-1}-\ref{lemma2-5}) to show that equilibrium indeterminacy and asset price bubbles can exist along a two-cycle equilibrium.\footnote{Solving the non-autonomous system (\ref{lemma2-1}-\ref{lemma2-5}) is far from trivial (see \cite{BosiLevanPham2022}'s Section 4,  \cite{HiranoTodaJPE2025}'s Section IV and \cite{BosiLevanPham2025} for detailed analyses in the case $p_t=0 \text{ }\forall t$).}

\begin{ex}[unique equilibrium with or without bubble]\label{example1}
Assume that $u(c)=\ln(c)$ $ \forall c$, and $e^o_{t}=0$ $\forall t$. Consider a particular case where there is no fiat money (i.e., $p_t=0$ $\forall t$). In this case, condition (\ref{lemma2-1}) implies that there is a unique equilibrium in the OLG model. Moreover, the asset price is $q_t=\frac{\beta}{1+\beta}e^y_t$. This is also part of a two-cycle equilibrium in the economy  $ \ee_{GEILA2}(u,\beta,(e_{1,t}, e_{2,t})_t,(d_t)_t)$ because  FOCs and TCVs (\ref{lemma2-1}-\ref{lemma2-5}) hold.

According to Lemma \ref{mont}, the equilibrium is bubbly if and only if $\sum_{t}{d_t}/{q_t}<\infty$, or, equivalently, 
$\sum_{t}{d_t}/{e^y_t}<\infty.$\footnote{A key condition for the existence of bubble $\sum_{t}{d_t}/{e^y_t}<\infty$ is also appeared in Section 9.3.2 in \cite{BosiLevanPham2017b},  Section 5.1.1 and Section 5.2 in  \cite{BosiLevanPham2018}, Example 5 in \cite{BosiLevanPham2021wp}, and Proposition 1 in \cite{HiranoTodaJPE2025}.} In words, this requires that the dividend would be very small with respect to the endowment of the economy.\footnote{\cite{BosiLevanPham2022}'s Proposition 7 focuses on the case $q_t>0, p_t=0 $ $\forall t$, and provide conditions under which there exists a continuum equilibria of the long-lived asset. Note that their analyses still apply for the case with only fiat money (their Section 4.1.1.)} 

\end{ex}


We now consider a case where the fiat money may have the strictly positive price (that is, $p_t>0$ $\forall t$).\footnote{See also \cite{weil90} for a detailed analysis of fiat money in a stochastic OLG model.}

\begin{ex}[continuum of equilibria with fiat money]
\label{exchange2}Consider an economy with only fiat money (that is $q_t=d_t=0$ for any $t$). Assume that $u'(c)=c^{-\sigma},$ where $\sigma>0$. Assume also that $e^y_t>e^o_t$ $\forall t$ and  $\lim_{t\to\infty}\beta^t(e^y_t)^{1-\sigma}=0$.

Any sequence $(p_t)_{t\geq 0}$ satisfying the following system 
\begin{align}\label{22}
e^y_t-e^o_t&\geq 2p_t\geq 0,\quad p_t=\beta p_{t+1}\Big(\frac{e^y_t-p_t}{e^o_{t+1}+p_{t+1}}\Big)^{\sigma},
\end{align}
is a sequence of prices of a two-cycle equilibrium of the  economy $ \ee_{GEILA2}(u,\beta,(e_{1,t}, e_{2,t})_t,(d_t)_t)$, where the endowments $(e_{i,t})_{t\geq 0}$ is defined by (\ref{endowment_change}).
\end{ex}
\begin{proof}See Appendix \ref{appli_proof}.\end{proof}

Let us consider two particular cases of Example \ref{exchange2}.
\begin{enumerate}
\item\label{point1_particular} Observe that $p_t=0$ $\forall t$ is a solution of the system (\ref{22}). This is a no trade equilibrium.
\item \label{case2}Focus on the case where $e^y_t=ye^t,e^o_t=de^t$ where $y,d,e>0$ where $d<y$ and $\beta e^{1-\sigma}<1$ (to ensure that $e^y_t>e^o_t$ $\forall t$ and  $\lim_{t\to\infty}\beta^t(e^y_t)^{1-\sigma}=0$). Assume that  
\begin{align}
1<\beta e (\frac{y}{de})^{\sigma}<(\frac{y}{d})^{\sigma}.
\end{align}

 Let $p$ be determined by $1=\beta e (\frac{y-p}{(d+p)e})^{\sigma}$. Then the sequence $(p_t)$ defined by $p_t=pe^t,\forall t\geq 0,$ is a two-cycle equilibrium. In this equilibrium, the fiat money's price is  strictly positive.

By combining with point \ref{point1_particular}, we observe that two sequences ($(p_t)_{t\geq 0}$ with $p_t=0,\forall t$,  and $(pe^t)_{t\geq 0}$) are two solutions to the system (\ref{22}). By using the same argument in the proof of Proposition 5 in \cite{BosiLevanPham2022}, we can prove that any sequence $(p_t)_{t\geq 0}$ defined by $0<p_0<p$ and $p_t=\beta p_{t+1}\Big(\frac{e^y_t-p_t}{e^o_{t+1}+p_{t+1}}\Big)^{\sigma}$ $\forall t$, is a solution to the system (\ref{22}). Consequently, there exists a continuum of two-cycle equilibria in which the price of fiat money is strictly positive.
\end{enumerate}
\begin{remark}
Example 1 in \cite{Kocherlakota1992} is a special case of our Example \ref{exchange2} with $\sigma=2,\beta=7/8, e=8/7, p=14, y=70, d=35$. An added value with respect to Example 1 in \cite{Kocherlakota1992} is that we show a continuum of two-cycle equilibria whose fiat money's price is strictly positive while \cite{Kocherlakota1992} only presents one equilibrium.
\end{remark}

\subsection{Production economy with financial assets}
Applying Proposition \ref{main-production} for a particular where $u(c)=\ln(c) $ $\forall c>0$,  we obtain the following result.
\begin{corollary} \label{coro1}Let $u(c)=\ln(c)$ $\forall c>0$ and  $\beta\in (0,1)$.  Assume that there is no endowment, i.e., $e_{i,t}=0$ $\forall i, \forall t$. Assume that $(q_t,p_t,K_{t+1})_{t\geq 0}$ is an equilibrium of the two-period OLG economy $\ee_{OLG}\equiv \ee_{OLG}(u,\beta,f,\delta,(d_t)_t)$,  i.e., \begin{subequations}\label{systemsolved}
\begin{align}
\label{d12}K_{t+1}+q_t+p_t&=\frac{\beta}{1+\beta}w_t=\frac{\beta}{1+\beta}\big(f(K_{t})-K_{t}f^{\prime }(K_{t})\big),\\
\label{b12}q_{t}R_{t+1} &=\left(q_{t+1}+d_{t+1}\right),\\
\label{p12}p_{t}R_{t+1}&=p_{t+1},\\
\label{bd12}K_{t+1}>0, q_t&\geq 0, p_t\geq 0.
\end{align}
\end{subequations}

If  
\begin{align}\label{focadd}
w_{t-1}\beta^2\big(1-\delta+f'(K_t)\big)\big(1-\delta+f'(K_{t+1})\big)\leq w_{t+1} \text{ } \forall t
\end{align}
then $(q_t,p_t,K_{t+1})_{t\geq 0}$ are asset prices and aggregate capital stock of a two-cycle equilibrium of the two-cycle economy $\ee_{GEILA2}\equiv \ee_{GEILA2}(u,\beta,f,\delta,(d_t)_t)$.  
 
\end{corollary}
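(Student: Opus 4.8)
The plan is to apply part~2 of Proposition~\ref{main-production} (the OLG~$\Rightarrow$~GEILA direction) and to check that, under the specialization $u(c)=\ln c$ with zero endowments, its three additional requirements reduce to exactly the hypotheses at hand. Since $(q_t,p_t,K_{t+1})_t$ is assumed to solve the OLG system~(\ref{systemsolved}), that proposition tells us the only things left to verify are the first-order inequality~(\ref{add12}) together with the two transversality conditions~(\ref{add22})--(\ref{add32}); I will show the former is equivalent to~(\ref{focadd}) and that the latter hold automatically.

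First I would simplify the arguments of $u'$ appearing in~(\ref{add12}) using the equilibrium relations, noting throughout that interiority ($K_{t+1}>0$) keeps every argument strictly positive, so that $u'(x)=1/x$ is legitimate. On the right-hand side, the saving rule~(\ref{d12}) at date $t+1$ gives $w_{t+1}-K_{t+2}-q_{t+1}-p_{t+1}=\tfrac{1}{1+\beta}w_{t+1}$, so that side equals $\beta R_{t+1}(1+\beta)/w_{t+1}$. On the left-hand side, I combine the no-arbitrage equations~(\ref{b12})--(\ref{p12}), namely $q_t+d_t=R_tq_{t-1}$ and $p_t=R_tp_{t-1}$, with~(\ref{d12}) at date $t-1$ to roll the expression back one period:
\begin{align*}
R_tK_t+q_t+d_t+p_t=R_t\big(K_t+q_{t-1}+p_{t-1}\big)=R_t\,\frac{\beta}{1+\beta}\,w_{t-1},
\end{align*}
so the left-hand side equals $(1+\beta)/(\beta R_t w_{t-1})$. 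Cancelling the common factor $1+\beta$ and cross-multiplying, I would conclude that~(\ref{add12}) holds if and only if $\beta^2R_tR_{t+1}w_{t-1}\le w_{t+1}$, which on substituting $R_t=1-\delta+f'(K_t)$ is precisely condition~(\ref{focadd}).

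Next I would dispose of the transversality conditions, which require no extra assumption. Using~(\ref{d12}) once more, $K_{2t+1}+q_{2t}+p_{2t}=\tfrac{\beta}{1+\beta}w_{2t}$ and $u'(w_{2t}-K_{2t+1}-q_{2t}-p_{2t})=(1+\beta)/w_{2t}$, so the product inside the limit in~(\ref{add22}) collapses to
\begin{align*}
\beta^{2t}\cdot\frac{1+\beta}{w_{2t}}\cdot\frac{\beta}{1+\beta}\,w_{2t}=\beta^{2t+1}\xrightarrow[t\to\infty]{}0,
\end{align*}
since $\beta\in(0,1)$; the identical computation at odd dates turns~(\ref{add32}) into $\beta^{2t}\to0$. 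Assembling the three checks, Proposition~\ref{main-production} delivers the desired two-cycle equilibrium.

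The only nontrivial step is the algebraic reduction of~(\ref{add12}). The point to watch is that the left-hand argument $R_tK_t+q_t+d_t+p_t$ must first be expressed in terms of date-$(t-1)$ quantities via \emph{both} no-arbitrage relations before the saving rule can be applied; once this is done the inequality telescopes cleanly to~(\ref{focadd}). The transversality conditions, by contrast, demand no effort once one observes that log utility makes the relevant consumption-times-marginal-utility products equal to pure powers of $\beta$, which vanish geometrically.
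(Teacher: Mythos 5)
Your proposal is correct and follows the same route as the paper's own (very terse) proof: the paper simply asserts that under log utility the Euler equation becomes the saving rule (\ref{d12}), that the transversality conditions then hold, and that (\ref{add12}) reduces to (\ref{focadd}); you supply exactly the computations behind those three assertions, and they check out. The only cosmetic remark is that your rollback of the left-hand side of (\ref{add12}) to date $t-1$ presupposes $t\geq 1$, which is implicit in the statement of (\ref{focadd}) anyway since it involves $w_{t-1}$.
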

\begin{proof}
Under logarithmic utility function, the Euler equation (\ref{d1}) becomes (\ref{d12}). By consequence, the TVCs (\ref{add22}) and (\ref{add32}) are satisfied. Last, thanks to \eqref{d12}, condition (\ref{add12}) becomes (\ref{focadd}).\end{proof}
We now apply Corollary \ref{coro1} to construct two-cycle equilibria with bubbles in general equilibrium models  with two infinitely-lived agents $\ee_{GEILA2}\equiv \ee_{GEILA2}(u,\beta,f,\delta,(d_t)_t)$.\footnote{Providing a complete analysis of the system (\ref{systemsolved}) is  quite hard  because it is a non-autonomous two-dimensional system with infinitely many parameters, including the dividend sequence $(d_t)$. See \cite{tirole85}, \cite{BosiHaHuyLeVanPhamPham2018}, \cite{HiranoTodaJPE2025}, \cite{PhamTodaECMA,PhamToda2025} for the interplay between dividend-paying asset and capital accumulation in OLG models.} To make clear our applications, we consider two standard cases: Linear and Cobb-Douglas production functions.
\subsubsection{Cobb-Douglas production function}
The following result is an application of Corollary \ref{coro1}.
\begin{ex}[pure bubble in a model with Cobb-Douglas production function]\label{purecd}Let $u(c)=\ln(c)$, $\beta\in (0,1)$, $\delta=1$, the Cobb-Douglas production function $f(k)=Ak^{\alpha}$, where $\alpha\in (0,1)$. Let us focus on the model with only the pure bubble asset and physical capital.  

Denote $K^{\ast }$ the capital in the bubbleless steady
state, that is the steady state without pure bubble asset $K^{\ast }=\rho^{1/\left( 1-\alpha \right) }, \text{ where } \rho\equiv \gamma \alpha A.$ 

Denote $\gamma \equiv \frac{\beta }{1+\beta }\frac{1-\alpha}{\alpha
}$. Observe that $\gamma={1}/{f^{\prime }(K^{\ast})}$.

Assume that $\gamma>1$ (i.e., $f^{\prime }(K^{\ast })<1$; this is a so-called "low interest rate condition").

Then, there exists a two-cycle equilibrium with non-negligible bubble (i.e., $p_t$ does not converge to zero) in the two-cycle economy  $\ee_{GEILA2}\equiv \ee_{GEILA2}(u,\beta,f,\delta,(d_t)_t)$. In such an equilibrium, the aggregate capital and the asset price are determined by
\begin{align}
\label{akt}
 K_{t}& =(\alpha A)^{\frac{1-\alpha ^{t}}{1-\alpha }}K_{0}^{\alpha
^{t}} \text{ }\forall t\geq 1 \text{ and } \text{ }p_{t} =(\gamma-1)K_{t+1} \text{ }\forall t\geq 0.
 \end{align}
Moreover,   $
\lim_{t\rightarrow \infty }K_{t}=(\alpha A)^{1/\left( 1-\alpha \right)
}<K^{\ast }\text{ and } \lim_{t\rightarrow \infty }p_{t}=(\gamma-1)(\alpha A)^{1/\left( 1-\alpha \right) }>0.$
\end{ex}
\begin{proof}See Appendix \ref{appli_proof}.\end{proof}

In terms of implications, Example \ref{purecd} shows that a standard model with pure bubble asset as in \cite{tirole85} can be embedded in a  GEILA model. Note that under specifications in Example \ref{purecd}, as we prove in Lemma \ref{prop4} in Appendix, the sequence $(p_t,K_{t+1})_{t\geq 0}$ determined by \eqref{akt} is the unique equilibrium satisfying two conditions: (i) condition (\ref{systemsolved}) with $q_t=d_t=0$ $\forall t$, and (ii) the asset price $p_t$ does not converge to zero.

\subsubsection{Linear technology}
Let us now consider  a linear production function: $F(K,L)=AK+wL$, where $A>0$ and $w>0$ represent, respectively, the capital and labor productivities. According to Corollary \ref{coro1}, an equilibrium $(q_t,p_t,K_{t+1})_{t\geq 0}$  of the two-period OLG economy are asset  prices and aggregate capital stocks of a two-cycle equilibrium of the two-cycle economy $\ee_{GEILA2}\equiv \ee_{GEILA2}(u,\beta,f,\delta,(d_t)_t)$ if and only if $\beta (1-\delta+A)\leq 1$.\footnote{\cite{LeVanPham2016}'s Section 6.1 corresponds to this model with $p_t=0,\forall t$. This case is also related to Proposition 5 in  \cite{BosiHaHuyLeVanPhamPham2018}.}

According to (\ref{b12}) and (\ref{p12}), we can compute that \begin{align*}
p_t&=R^tp_0, \quad q_0=\sum_{s=1}^t\frac{d_s}{R^s}+\frac{q_t}{R^t}, \text{ which implies that } q_t=R^s\big(q_0-\sum_{s=1}^t\frac{d_s}{R^s}\big).
\end{align*}
To sum up, we obtain the following result.
\begin{ex}
\label{ex2}
Assume that (1) $u(c)=\ln(c)$, $\beta\in (0,1)$, (2) there is no endowment, i.e., $e_{i,t}=0$ $\forall i,\forall t$, (3) $F(K,L)=AK+wL$, (4) $R\equiv  1-\delta+A\leq 1$,
\begin{align}
\frac{\beta}{1+\beta}w&>\sum_{s=1}^t\frac{d_s}{R^s} \text{ and }\frac{\beta}{1+\beta}w>R^t\big(\frac{\beta}{1+\beta}w-\sum_{s=1}^t\frac{d_s}{R^s}\big) \text{ }\forall t.
\end{align}

Then, any sequence $(q_{t},p_t,K_{t+1})_{t\geq 0}$ satisfying the following system \begin{subequations}\label{lastexample}
\begin{align}
p_0\geq 0, \quad p_{t}=R^tp_0,  \quad \sum_{s=1}^{\infty}\frac{d_{s}}{R^{s}}&\leq q_{0}<\frac{\beta}{1+\beta}w-p_0,\\
q_{t}&=R^t\left(q_{0}-\sum_{s=1}^{t}\frac{d_s}{R^{s}}\right), \label{24}\\
nk_{t+1}+q_{t}+p_t&=\frac{\beta}{1+\beta}w,
\label{241}
\end{align}\end{subequations}
is part of a two-cycle equilibrium in the two-cycle economy  $\ee_{GEILA2}$. 
 Moreover, the following statements hold.
\begin{enumerate}
\item Fiat money has a positive price if $p_0>0$. Moreover, the supremum value $\bar{p}_0$ of initial fiat price $p_0$, which ensures that $p_t>0$ $\forall t$, is determined by $\bar{p}_0=\frac{\beta}{1+\beta}w-\sum_{s=1}^{\infty}\frac{d_{s}}{R^{s}}$.
\item If $q_0=\sum_{s=1}^{\infty}\frac{d_{s}}{R^{s}}$, then there is no bubble of the long-lived asset. In this case, we have $p_0\geq 0$. There exists a continuum of equilibria with pure bubble, indexed by $p_0$.

\item If $q_0>\sum_{s=1}^{\infty}\frac{d_{s}}{R^{s}}$, then there is a bubble of the long-lived asset. Moreover, in this case, $%
\lim_{t\rightarrow \infty }b_{t}>0$ if and only if $R=1$.

\end{enumerate}
\end{ex}
Example \ref{ex2} shows that there exists a continuum of equilibria with a strictly positive price of fiat money (pure bubble asset) and/or with bubbles of the long-lived assets.  Bubbles of the long-lived asset and fiat money can co-exist. Indeed, take $p_0>0$ so that $\sum_{s=1}^{\infty}\frac{d_{s}}{R^{s}}<\frac{\beta}{1+\beta}w-p_0$. Then, take $q_0$ so that $\sum_{s=1}^{\infty}\frac{d_{s}}{R^{s}}<q_0<\frac{\beta}{1+\beta}w-p_0$. Last, take $k_{t+1}=\frac{\beta}{1+\beta}w-q_t-p_t$. Then, the sequence $(k_{t+1},q_{t},p_t)_{t\geq 0}$ is strictly positive and satisfies  (\ref{lastexample}). By consequence, it is part of an equilibrium whose fiat money's prices are strictly positive (i.e., $p_t>0$ $\forall t$)  and the long-lived asset has a bubble (i.e., $q_0>\sum_{s=1}^{\infty}\frac{d_{s}}{R^{s}}$). 

In Example \ref{ex2}, when $R<1$, we have $\lim_{t\to\infty}q_t=\lim_{t\to\infty}p_t=0$. When $R=1$, we have  $\lim_{t\to\infty}p_t=p_0$ and  $\lim_{t\to\infty}q_t=q_0-\sum_{s=1}^{\infty}\frac{d_{s}}{R^{s}}$. This shows that the growth rate and the dividend's size play an important role on the asset prices.






\section{Conclusion}
This paper bridges two foundational macroeconomic models: the infinite-horizon general equilibrium model with infinitely-lived agents (GEILA) and the overlapping generations (OLG) model. By establishing the connection between the two models, we have provided a unified view that deepens our understanding of phenomena like equilibrium indeterminacy and rational asset price bubbles in both models. Our results also allow us to construct general equilibrium models with infinitely-lived agents, where asset price bubbles exist. Moreover, we have shown that a cycle of exogenous parameters, which generates a two-cycle economy (Definition \ref{definition_2cycle_economy}),  can create equilibrium indeterminacy and asset price bubbles (see Section \ref{appli}).
\appendix
\section{Appendix}
  \label{appen}
  \subsection{Proof of Proposition \ref{lemma-2cycle}}
\label{twomodels_proof}
 \setcounter{equation}{0} 
\numberwithin{equation}{section}

{ 
To prove Proposition \ref{lemma-2cycle}, we need the following result.

\begin{lemma}\label{lemma1} 
Let Assumptions \ref{Assumption1}-\ref{Assumption2} be satisfied.\\
Part A (necessary conditions). If a sequence $(q_t, p_t, r_t, \left (c_{i,t}, k_{i,t+1}, a_{i,t},b_{i,t} \right ) _{i\in I}, K_t) _{t\geq 0}$ is an equilibrium, then  there exists 
non-negative sequences $((\lambda_{i,t},\sigma _{i,t}, \mu_{i,t}, \nu _{i,t})_{i\in I})_{t\geq 0}$ satisfying the following conditions for any $t,i$:
 \begin{itemize}
 \item[(i)] $c_{i,t}>0, k_{i,t+1}\geqslant 0, \; a_{i,t} \geqslant 0, b_{i,t} \geq 0$,  $K_t \geqslant 0, q_t >0, r_t >0, p_t\geq 0$.
 \item[(ii)] $K_t= \sum _{i\in I} k_{i,t}$, $\sum _{i\in I} a_{i,t} =1$,  $\sum _{i\in I} b_{i,t} =1$.
\item[(iii)] $f (K_t)-r_t K _t = w_t=\max \{f (K)-r_tK : k\geqslant 0 \}$.
\item[(iv)] $c_{i,t}+k_{i,t+1}-(1-\delta)k_{i,t}+q_ta_{i,t}+p_tb_{i,t}
= r_t k_{i,t}+(q_t+d_t)a_{i,t-1}+p_tb_{i,t-1}+L_{i,t}w_t+e_{i,t}$.

\item[(v)] First order conditions:
 \begin{align}
 \label{fock}\lambda_{i,t}=\beta_i^tu_i'(c_{i,t}), \quad 
 \lambda_{i,t}&\geq R_{t+1} \lambda_{i,t+1}+\sigma_{i,t}, \quad \sigma_{i,t} k_{i,t+1}=0,\\
\label{focq} \lambda_{i,t}q_t&=(q_{t+1}+d_{t+1})\lambda_{i,t+1}+\mu_{i,t}, \quad \mu_{i,t}a_{i,t}=0,\\
  \lambda_{i,t}p_t&=\lambda_{i,t+1}p_{t+1} +\nu_{i,t}, \quad \nu_{i,t}b_{i,t}=0.\label{focp}
\end{align}
\end{itemize}
If we require, in addition, Assumption \ref{Assumption3} and $\sum_{t=0}^{\infty}\beta^t|u(c_{i,t})|< \infty$, 
then we have 
\begin{align}\text{ (vi) transversality conditions: }
\limm_{t\rightarrow \infty}\beta_i^tu_i'(c_{i,t})(k_{i,t+1}+q_ta_{i,t}+p_tb_{i,t})=0. \label{tvc_proof}
\end{align}
 Part B (sufficient conditions).  If sequences $(q_t, p_t, \left (c_{i,t}, k_{i,t+1}, a_{i,t},b_{i,t} \right ) _{i\in I}, K_t) _{t\geq }$ and \\ $((\lambda_{i,t},\sigma _{i,t}, \mu_{i,t}, \nu _{i,t})_{i\in I})_{t\geq 0}$ satisfy conditions (i-vi) above, then $(q_t, p_t, \left (c_{i,t}, k_{i,t+1}, a_{i,t},b_{i,t} \right ) _{i\in I}, K_t) _{t\geq 0}$ is an intertemporal equilibrium. 
\end{lemma}

\begin{proof}[{\bf Proof of Lemma \ref{lemma1}}] For pedagogical purposes and to make the paper self-contained, we provide an elementary proof.\\
{\bf Part B (sufficient condition)}.  We use the classic approach in the optimal control theory (see  \cite{BosiLevanPham2022} for instance). It suffices to prove the optimality of the allocation $\left (c_{i,t}, k_{i,t+1}, a_{i,t},b_{i,t} \right)_{t\geq 0}$. Take an arbitrary feasible allocation $(c'_{i,t}, k'_{i,t+1}, a'_{i,t},b'_{i,t})_{t\geq 0}$. We need to prove that $\sum_{t=0}^{\infty}\beta_i^tu_i(c_{i,t})\geq \limsup_{T\to\infty}\sum_{t=0}^T\beta_i^tu_i(c'_{i,t})$. Without loss of generality, assume that the budget constraint is binding, i.e., $c'_{i,t}+k'_{i,t+1}+q_ta'_{i,t}+p_tb'_{i,t}= R_tk'_{i,t}+(q_t+d_t)a'_{i,t-1}+p_tb'_{i,t-1}+w_tL_{i,t}+e_{i,t}$. Denote $E_{i,t}\equiv w_tL_{i,t}+e_{i,t}$. We have 
\begin{align}\label{foc_consequence}
    \lambda_{i,t}(c'_{i,t}+k'_{i,t+1}+q_ta'_{i,t}+p_tb'_{i,t})=\lambda_{i,t}(E_{i,t}+R_tk'_{i,t}+(q_t+d_t)a'_{i,t-1}+p_tb'_{i,t-1}).
\end{align}
From the FOCs, we have  
\begin{subequations}\label{focsmultiply}
\begin{align}\lambda_{i,t}k'_{i,t+1}&=R_{t+1}\lambda_{i,t+1}k'_{i,t+1}+\sigma_{i,t}k'_{i,t+1},&\\
\lambda_{i,t}q_{i,t}a'_{i,t}&=\lambda_{i,t+1}(q_{t+1}+d_{t+1})a'_{i,t}+\mu_{i,t}a'_{i,t},&  \lambda_{i,t}p_{i,t}b'_{i,t}&=\lambda_{i,t+1}p_{t+1}b'_{i,t}+\nu_{i,t}a'_{i,t}.
\end{align}
\end{subequations}
By (\ref{foc_consequence}), we get
\begin{align*}
\lambda_{i,t}(c'_{i,t}-E_{i,t})=\lambda_{i,t}(R_tk'_{i,t}+(q_t+d_t)a'_{i,t-1}+p_tb'_{i,t-1})-\lambda_{i,t}(k'_{i,t+1}+q_ta'_{i,t}+p_tb'_{i,t}).
\end{align*}
Then, by taking the sum over $t$ and using  (\ref{focsmultiply}), we obtain
\begin{align*}
\sum_{t=0}^T\lambda_{i,t}(c'_{i,t}-E_{i,t})=&\lambda_{i,0}(R_0k'_{i,0}+(q_0+d_0)a'_{i,-1}+p_0b'_{i,-1})-\lambda_{i,T}(k'_{i,T+1}+q_ta'_{i,T}+p_tb'_{i,T})\\
&-\sum_{t=0}^{T-1}\lambda_{i,t}(\sigma_{i,t}k'_{i,t+1}+\mu_{i,t}a'_{i,t}+\nu_{i,t}b'_{i,t}).
\end{align*}
Applying this formula for the allocation $\left (c_{i,t}, k_{i,t+1}, a_{i,t},b_{i,t} \right )_{t\geq 0}$ and using $\sigma_{i,t}k_{i,t+1}=\mu_{i,t}a_{i,t}=\nu_{i,t}b_{i,t}=0$, we get 
\begin{align*}
\sum_{t=0}^T\lambda_{i,t}(c_{i,t}-E_{i,t})=&\lambda_{i,0}(R_0k_{i,0}+(q_0+d_0)a_{i,-1}+p_0b_{i,-1})-\lambda_{i,T}(k_{i,T+1}+q_ta_{i,T}+p_tb_{i,T}).
\end{align*}
Taking the difference between $\sum_{t=0}^T\lambda_{i,t}(c'_{i,t}-E_{i,t})$ and  $\sum_{t=0}^T\lambda_{i,t}(c'_{i,t}-E_{i,t})$,  we  obtain
\begin{align*}
\sum_{t=0}^T\lambda_{i,t}(c_{i,t}-c'_{i,t})=&\sum_{t=0}^{T-1}\lambda_{i,t}(\sigma_{i,t}k'_{i,t+1}+\mu_{i,t}a'_{i,t}+\nu_{i,t}b'_{i,t})+\lambda_{i,T}(k'_{i,T+1}+q_ta'_{i,T}+p_tb'_{i,T})\\
&-\lambda_{i,T}(k_{i,T+1}+q_ta_{i,T}+p_tb_{i,T})\\
\geq &-\lambda_{i,T}(k_{i,T+1}+q_ta_{i,T}+p_tb_{i,T}).
\end{align*}
Since $u_i$ is concave, we have $u_i(c_{i,t})-u_i(c'_{i,t})\geq u'_i(c_{i,t})(c_{i,t}-c'_{i,t})$. Then,
\begin{align*}
 \sum_{t=0}^T\big(\beta_i^tu_i(c_{i,t})-\beta_i^tu_i(c'_{i,t})\big)\geq&  \sum_{t=0}^T\beta_i^tu'_i(c_{i,t})(c_{i,t}-c'_{i,t})=\sum_{t=0}^T\lambda_{i,t}(c_{i,t}-c'_{i,t})\\
 \geq &-\lambda_{i,T}(k_{i,T+1}+q_ta_{i,T}+p_tb_{i,T}).
\end{align*}
Thanks to the transversality condition $\lim_{T\to\infty}\lambda_{i,T}(k_{i,T+1}+q_ta_{i,T}+p_tb_{i,T})=0$, we obtain $\sum_{t=0}^{\infty}\beta_i^tu_i (c_{i,t})\geq \limsup_{T\to\infty}\sum_{t=0}^T\beta_i^tu_i(c'_{i,t})$. We have finished our proof.

{\bf Part A (necessary condition)}. 

{\bf Step 1} (first order conditions). Let us prove the first order condition (\ref{focp}) (conditions (\ref{fock}) and (\ref{focq}) can be proved by using the same argument). To do so, it suffices to prove that (i) $\lambda_{i,t}p_t\geq \lambda_{i,t+1}p_{t+1}$ and (ii)   if $b_{i,t}>0$, then $\lambda_{i,t}p_t= \lambda_{i,t+1}p_{t+1}$.

Point (i). Fix a date $t$. Obviously, if $p_{t+1}=0$, then $\lambda_{i,t}p_t\geq \lambda_{i,t+1}p_{t+1}$.  Suppose now that $p_{t+1}>0$. Then, we have $p_{t}>0$. Indeed, if $p_t=0$, then in optimal $a_{i,t}>1$ because agent $i$ can take $a_{i,t}$ arbitrary large to get more consumption in date $t+1$ (because $p_{t+1}>0$) but he(she) does not need to pay any thing in date $t$ (because $p_t=0$). This is a contradiction because $a_{i,t}\leq \sum_{j}a_{j,t}=1$.

So, we focus on the case in which $p_t>0$ and $p_{t+1}>0$. 
By Assumption \ref{Assumption2} and $r_0>0$, we have $r_0k_{i,0}+(q_0+d_0)a_{i,-1}+p_0b_{i,-1}+w_0L_{i,0}+e_{i,0}>0$. Then, by Inada's condition, we have $c_{i,t}>0$ and $c_{i,t+1}>0$. Consider an allocation $(c'_{i,t}, k'_{i,t+1}, a'_{i,t},b'_{i,t})_{t\geq 0}$ defined  by $(k'_{i,\tau+1}, a'_{i,\tau})=(k_{i,\tau+1}, a_{i,\tau})$ $\forall \tau$,   $c'_{i,\tau}=(c'_{i,\tau})$ $\forall \tau\in \{t,t+1\}$, $b'_{i,\tau}=b'_{i,\tau}$ $\forall \tau \not=t$, and 
\begin{align*}
c'_{i,t}&=c_{i,t}-\epsilon,& b'_{i,t}&=b_{i,t}+\frac{\epsilon}{p_t},&c'_{i,t+1}=c_{i,t+1}+\frac{p_{t+1}}{p_t}\epsilon.
\end{align*}
where $\epsilon\in (0,c_{i,t})$. Clearly, this allocation is feasible. By the optimality of $(c_{i,t}, k_{i,t+1}, a_{i,t},b_{i,t})_{t\geq 0}$, we have 
$\beta_i^tu_i(c_{i,t})+\beta_i^{t+1}u_i(c_{i,t+1})\geq \beta_i^tu_i(c'_{i,t})+\beta_i^{t+1}u_i(c'_{i,t+1})$. This means that
\begin{align*}
    \frac{u_i(c_{i,t})-u_i(c_{i,t}-\epsilon)}{\epsilon}\geq \beta_i \frac{u_i\big(c_{i,t+1}+\frac{p_{t+1}}{p_t}\epsilon\big)-u_i(c_{i,t+1})}{\frac{p_{t+1}}{p_t}\epsilon}\frac{p_{t+1}}{p_t}.
\end{align*}
Let $\epsilon$ tend to $0$, we get $\lambda_{i,t}p_t\geq \lambda_{i,t+1}p_{t+1}$.

Point (ii). We now suppose $b_{i,t}>0$, then by doing the same argument as above, but with $\epsilon<0$ satisfying $b_{i,t}+\frac{\epsilon}{p_t}>0$ and $c_{i,t+1}+\frac{p_{t+1}}{p_t}\epsilon>0$, we get  $\lambda_{i,t}p_t\leq \lambda_{i,t+1}p_{t+1}$. So, we obtain $\lambda_{i,t}p_t= \lambda_{i,t+1}p_{t+1}$ if $b_{i,t}>0$.

{\bf Step 2} (transversality condition). We prove the transversality condition (\ref{tvc_proof}) by using the approach of \cite{EkelandScheinkman1986} and \cite{Kamihigashi2000}. 

Fix an agent $i$ and a date $t$. Denote $x_i\equiv (x_{i,s})_{s\geq 0}\equiv (k_{i,s+1},a_{i,s},b_{i,s})_{s\geq 0}$. For $\lambda\in (\underline{\lambda},1)$, define $x_i(\lambda)$ and $c_{i,t}(\lambda)$ by
\begin{align*}
   & x_{i}(\lambda)=(x_{i,0}, \ldots, x_{i,t-1},\lambda x_{i,t}, \lambda x_{i,t+1}, \ldots),\\
    &   c_{i,t}(\lambda)=c_{i,t}+(1-\lambda)(k_{i,t+1}+q_ta_{i,t}+p_tb_{i,t}), \quad c_{i,s}(\lambda)=c_{i,s} \text{ } \forall s<t, \quad c_{i,s}(\lambda)=\lambda c_{i,s} \text{ } \forall s> t.
\end{align*}
It is clear that $(c_{i,s}(\lambda),x_{i,s}(\lambda))_{s\geq 0}$ is a feasible allocation of the maximization problem of agent $i$. So, the optimality of $\left (c_{i,t}, k_{i,t+1}, a_{i,t},b_{i,t} \right )_{t\geq 0}$ implies that $    \limsup_{T\uparrow \infty}\sum_{s=0}^T\beta_i^s\big(u_i(c_{i,s}-u_i(c_{i,s}(\lambda))\big)\geq 0$.\footnote{Here, "$\uparrow$" means "increases to".} Then, we obtain
\begin{align*}
\beta_i^t\frac{u_i\big(c_{i,t}+(1-\lambda)(k_{i,t+1}+q_ta_{i,t}+p_tb_{i,t}\big)-u_i(c_{i,t})}{1-\lambda}\leq&   \limsup_{T\uparrow \infty}\sum_{s=t+1}^T\beta_i^s\frac{u_i(c_{i,s})-u_i(\lambda c_{i,s})}{1-\lambda}\\
\leq& \limsup_{T\uparrow \infty}\big(\sum_{s=t+1}^T\beta_i^s \theta u_i(c_{i,s})+\beta_i^s x\big),
\end{align*}
where we use Assumption \ref{Assumption3} in the last inequality.
Let $\lambda$ increasingly tend to $1$, we have
\begin{align*}
\beta_i^tu_i'(c_{i,t})(k_{i,t+1}+q_ta_{i,t}+p_tb_{i,t})\leq & \limsup_{T\uparrow \infty}\big(\sum_{s=t+1}^T\beta_i^s |\theta| |u_i(c_{i,s})|+\beta_i^s |x|\big).
\end{align*}
Let $t$ tend to infinity and use $\sum_{t=0}^{\infty}\beta_i^t|u(c_{i,t})|< \infty$, the right hand side converges to zero and hence we obtain $\limsup_{t\to \infty}\beta_i^tu_i'(c_{i,t})(k_{i,t+1}+q_ta_{i,t}+p_tb_{i,t})\leq 0$. Since $k_{i,t+1}+q_ta_{i,t}+p_tb_{i,t}\geq 0 $ $\forall t$, we have $\lim_{t\to \infty}\beta_i^tu_i'(c_{i,s})(k_{i,t+1}+q_ta_{i,t}+p_tb_{i,t})=0$.
\end{proof}

\begin{proof}[{\bf Proof of Proposition \ref{lemma-2cycle}}]
 
According to Lemma \ref{lemma1}, first order conditions become
 \begin{subequations}
 \begin{align}
 \frac{1}{r_{2t}+1-\delta} = \frac{q_{2t-1}}{q_{2t}+d_{2t}} = \frac{\beta_2 u_2'(c_{2,2t}) }{ u_2'(c_{2,2t-1}) }\geq  \frac{\beta_1 u_1'(c_{1,2t}) }{ u_1'(c_{1,2t-1}) }, \\
 \frac{1}{r_{2t+1}+1-\delta} = \frac{q_{2t}}{q_{2t+1}+d_{2t+1}} = \frac{\beta_1 u_1'(c_{1,2t+1}) }{ u_1'(c_{1,2t}) }\geq  \frac{\beta_2 u_2'(c_{2,2t+1}) }{ u_2'(c_{2,2t}) },\\
 p_{2t-1}=\frac{\beta_2 u_2'(c_{2,2t}) }{ u_2'(c_{2,2t-1}) }p_{2t}\geq \frac{\beta_1 u_1'(c_{1,2t}) }{ u_1'(c_{1,2t-1}) } p_{2t},\\
 p_{2t}=\frac{\beta_1 u_1'(c_{1,2t+1}) }{ u_1'(c_{1,2t}) }p_{2t+1}\geq  \frac{\beta_2 u_2'(c_{2,2t+1}) }{ u_2'(c_{2,2t}) }p_{2t+1}.
\end{align}
\end{subequations}
According to (\ref{9}-\ref{10}) and $\beta_1=\beta_2=\beta, u_1=u_2=u$, the inequalities in FOCs are rewritten as follows:
\begin{align*}\frac{\beta u'(e_{2,2t}+R_{2t}K_{2t}+q_{2t}+d_{2t}+p_{2t}) }{u'(e_{2, 2t-1}+w_{2t-1}-K_{2t}-q_{2t-1}-p_{2t-1})}\geq  \frac{\beta u'(e_{1, 2t}+w_{2t}-K_{2t+1}-q_{2t}-p_{2t}) }{ u'(e_{1,2t-1}+R_{2t-1}K_{2t-1}+q_{2t-1}+d_{2t-1}+p_{2t-1})}, \\
\frac{\beta u'(e_{1,2t+1}+R_{2t+1}K_{2t+1}+q_{2t+1}+d_{2t+1}+p_{2t+1}) }{ u'(e_{1, 2t}+w_{2t}-K_{2t+1}-q_{2t}-p_{2t}) }\geq  \frac{\beta u'(e_{2, 2t+1}+w_{2t+1}-K_{2t+2}-q_{2t+1}-p_{2t+1}) }{ u'(e_{2,2t}+R_{2t}K_{2t}+q_{2t}+d_{2t}+p_{2t}) }.
\end{align*}
Transversality conditions become
\begin{subequations}
\begin{align*}
\limm_{t\rightarrow \infty}\beta_1^{2t}u_1'(c_{1,2t})(k_{1,2t+1}+q_{2t}a_{1,2t}+p_{2t}b_{1,2t})&=0,\\
\limm_{t\rightarrow \infty}\beta_1^{2t+1}u_1'(c_{1,2t+1})(k_{1,2t+2}+q_{2t+1}a_{1,2t+1}+p_{2t+1}b_{1,2t+1})&=0,\\
\limm_{t\rightarrow \infty}\beta_2^{2t}u_2'(c_{2,2t})(k_{2,2t+1}+q_{2t}a_{2,2t}+p_{2t}b_{2,2t})&=0,\\
\limm_{t\rightarrow \infty}\beta_2^{2t+1}u_2'(c_{2,2t+1})(k_{2,2t+2}+q_{2t+1}a_{2,2t+1}+p_{2t+1}b_{2,2t+1})&=0.
\end{align*}
\end{subequations}
These are rewritten as follows:
\begin{subequations}
\begin{align}
\limm_{t\rightarrow \infty}\beta_1^{2t}u_1'(c_{1,2t})(K_{2t+1}+q_{2t}+p_{2t})=0,\\
\limm_{t\rightarrow \infty}\beta_2^{2t+1}u_2'(c_{2,2t+1})(K_{2t+2}+q_{2t+1}+p_{2t+1})=0.
\end{align}
\end{subequations}

Since $\beta_1=\beta_2=\beta, u_1=u_2=u$, TVCs become
\begin{subequations}
\begin{align}
\limm_{t\rightarrow \infty}\beta^{2t}u'(e_{1, 2t}+w_{2t}-K_{2t+1}-q_{2t}-p_{2t})(K_{2t+1}+q_{2t})=0,\\
\limm_{t\rightarrow \infty}\beta^{2t-1}u'(e_{2, 2t-1}+w_{2t-1}-K_{2t}-q_{2t-1}-p_{2t-1})(K_{2t}+q_{2t-1})=0.
\end{align}
\end{subequations}

\begin{remark}
With the notations $e^o_{2t}\equiv e_{2,2t},  e^o_{2t+1}\equiv e_{1,2t+1}$ and $e^y_{2t}\equiv e_{1,2t},  e^y_{2t+1}\equiv e_{2,2t+1}$, 
 the inequalities in FOCs become 
\begin{align*}
\frac{\beta u'(e^o_{t+1}+R_{t+1}K_{t+1}+q_{t+1}+d_{t+1}+p_{t+1})}{ u'(e^y_t+w_{t}-K_{t+1}-q_{t}-p_t) }\geq  \frac{\beta u'(e^y_{t+1}+w_{t+1}-K_{t+2}-q_{t+1}-p_{t+1}) }{ u'(e^o_{t}+R_{t}K_{t}+q_{t}+d_{t})}.
\end{align*}
\end{remark}
\end{proof}
\subsection{Proofs for Section \ref{appli}}
\label{appli_proof}
\begin{proof}[{\bf Proof of Example \ref{exchange2}}]The system (\ref{lemma2-1}-\ref{lemma2-5})  becomes. 
\begin{subequations}
\begin{align}
\label{lemma2-11}  p_{t+1}&=p_{t}R_{t+1}\geq 0,\\
\label{lemma2-23}\frac{1}{R_{t+1}}\equiv &\frac{\beta u'(e^o_{t+1}+p_{t+1}) }{ u'(e^y_t-p_t) }\geq  \frac{\beta u'(e^y_{t+1}-p_{t+1})}{u'(e^o_{t}+p_t)},\\
\label{lemma2-45}&\limm_{t\rightarrow \infty}\beta^{2t}u'(e^y_{2t}-p_{2t})p_{2t}=\limm_{t\rightarrow \infty}\beta^{2t-1}u'(e^y_{2t-1}-p_{2t-1})p_{2t-1}=0.
\end{align}
\end{subequations}
Then, we can check these conditions under assumptions in Example \ref{exchange2}. For instance, let us look at (\ref{lemma2-45}). Since $u'(c)=c^{-\sigma}$, condition (\ref{lemma2-45}) is equivalent to $\lim_{t\to\infty}\beta^t(e^y_t-p_t)^{-\sigma}p_t=0$. This is satisfied because $p_t/ e^y_t\leq 1/2$ and $
\lim_{t\to\infty}\beta^t(e^y_t)^{1-\sigma}=0$ with $\sigma >0$.
\end{proof}

\begin{proof}
[{\bf Proof of Example \ref{purecd}}]

According to Corollary \ref{coro1}, it suffices to show that the sequence $(K_{t+1},p_t)_{t\geq 0}$ satisfies  the equilibrium system
\begin{align}
\begin{cases}w_{t-1}\beta^2 f'(K_t)f'(K_{t+1})\leq w_{t+1}, \\
\label{purecd_check}K_{1}+p_{0} =\frac{\beta }{1+\beta }w_0, \quad K_{t+1}+ p_{t}=\gamma \alpha AK_{t}^{\alpha } \text{ } \forall t\geq 0, \text{ where } \gamma\equiv \frac{\beta }{1+\beta }\frac{1-\alpha}{\alpha
}, \\
p_{t+1}=\alpha AK_{t+1}^{\alpha -1}p_{t}, \quad K_{t+1}>0, p_{t}\geq 0.
\end{cases}
\end{align}
It is easy to verify the last four conditions in the system \eqref{purecd_check}. Let us check the first condition. 

Let us focus on the case  2.(b) (that is, $\gamma>1$ and $p_0=\bar{b}$) in Lemma \ref{prop4} below.  We have $K_{t+1}= \alpha A K_{t}^{\alpha }$.
Since $\delta=1$, condition (\ref{focadd}) becomes
\begin{align*}
&w_{t-1}\beta^2f'(K_t)f'(K_{t+1})\leq w_{t+1} \\
\Leftrightarrow & (1-\alpha)AK_{t-1}^{\alpha}\beta^2 \alpha AK_t^{\alpha-1}\alpha AK_{t+1}^{\alpha-1}\leq (1-\alpha)AK_{t+1}^{\alpha} \Leftrightarrow  \beta^2 \leq \frac{K_{t+1}}{\alpha A K_t^{\alpha}} \frac{K_{t}}{\alpha A K_{t-1}^{\alpha}}= 1,
\end{align*}
which is satisfied because $\beta <1$

\end{proof}
The following result completes the proof of Example \ref{purecd}.
\begin{lemma}
\label{prop4} 

Let $K_0$ be exogenous and strictly positive. Consider the following system:
\begin{subequations}\label{pbs}
\begin{align}
K_{1}+b_{0}& =\frac{\beta }{1+\beta }w_0, \text{ } 
K_{t+1}+ p_{t} =\gamma \alpha AK_{t}^{\alpha } \text{ } \forall t\geq 0, \text{ where } \gamma\equiv \frac{\beta }{1+\beta }\frac{1-\alpha}{\alpha
},
\label{103} \\
p_{t+1}& =\alpha AK_{t+1}^{\alpha -1}p_{t},  \label{107}\\
K_{t+1}&>0, p_{t}\geq 0.
\end{align}
\end{subequations}
\begin{enumerate}
\item If $\gamma\leq 1$ (i.e., $f^{\prime }(K^{\ast })\geq 1$), the system (\ref{pbs}) 
has a unique solution given by
\begin{equation}
p_t=0, \text{ } 
K_{t}=\rho^{\frac{1-\alpha ^{t}}{1-\alpha }}K_{0}^{\alpha
^{t}} \text{ } \forall t\geq 1,  \label{110}
\end{equation}
where $\rho \equiv \gamma \alpha A$. Moreover, $\lim_{t\rightarrow \infty }K_{t}=K^{\ast }$.

\item If $\gamma>1$ (i.e., $f^{\prime }(K^{\ast })<1$), the system   (\ref{pbs}) is indeterminate: The set of solutions is any sequence  $(K_{t+1},p_{t})_{t\geq
0}$ defined by (\ref{103}), (\ref{107}), and $p_{0}\in \left[ 0,\bar{b}\right] $, where the so-called bubble critical value $\bar{b}$ is defined by
\begin{equation}\label{bx}
\bar{b}\equiv w_0 \frac{\beta }{1+\beta }\frac{%
\gamma-1}{\gamma }=(1-\alpha)AK_0^{\alpha}\left[ 1-%
\frac{1+\alpha \beta }{\left( 1-\alpha \right) \left(
1+\beta \right) }\right],
\end{equation}%
which is positive because $\gamma>1$.

Moreover, the following properties hold.
\begin{enumerate}
\item (bubbleless solution) If $p_{0}=0$, then $p_{t}=0$ $\forall t$. The sequence $\left( K_{t}\right)_{t\geq 1}$ is determined by (\ref{110}).

\item (bubbly solution) If $p_0>0$, then $p_{t}>0$ $\forall t$.

When $p_{0}<\bar{b}$, we have $\lim_{t\rightarrow \infty }p_{t}=0$ and $%
\lim_{t\rightarrow \infty }K_{t}=K^{\ast }$.

When $p_{0}=\bar{b}$, we have $\lim_{t\rightarrow \infty }p_{t}>0$. Precisely, we have
\begin{align}\label{122} 
p_{t}& =(\gamma-1)K_{t+1}\text{ } \forall t\geq 0, \quad 
K_{t} =\rho _{1}^{\frac{1-\alpha ^{t}}{1-\alpha }}K_{0}^{\alpha
^{t}}\text{ } \forall t\geq 1,
\end{align}%
where $\rho _{1}\equiv \alpha A$. Moreover, $\lim_{t\rightarrow \infty }K_{t}=(\alpha A)^{1/\left( 1-\alpha \right)
}<K^{\ast }\text{ and } \lim_{t\rightarrow \infty }p_{t}=(\gamma-1)(\alpha A)^{1/\left( 1-\alpha \right) }>0.$
\end{enumerate}
\end{enumerate}
\end{lemma}

\begin{proof}[{\bf Proof of Lemma \ref{prop4}}]The proof presented here is similar to the proof in the literature (see Proposition 4 in \cite{BosiHaHuyLeVanPhamPham2018} among others).

If $p_0>0$, then we have $p_t>0,\forall t$.
Combining (\ref{103}) and (\ref{107}), we have 
\begin{align}
\frac{K_{t+1}}{p_t}+1 =\frac{\gamma \alpha AK_{t}^{\alpha }}{ p_{t}}=\frac{\gamma \alpha AK_{t}^{\alpha }}{\alpha AK_{t}^{\alpha -1}p_{t-1}}=\gamma \frac{K_{t}}{p_{t-1}}\text{ } \forall t\geq 1.
\end{align}

Denote $z_{t}\equiv K_{t+1}/p_{t}$ $\forall t\geq 0$. We get a difference equation: $z_{t+1}=\gamma z_{t}-1 \text{ } \forall t\geq 0.$

If $\gamma\neq 1$, the solution of
this difference equation must satisfy
$$z_{t}=\gamma^{t}z_{0}-\frac{1-\gamma^{t}}{1-\gamma} \text{ } \forall t\geq 1$$

We now consider different cases.
\begin{enumerate}
\item 
When $\gamma\leq 1$, there is no bubble. Indeed, suppose that there is a pure bubble. Since $\gamma\leq 1$, condition $z_{t+1}=\gamma z_{t}-1$ implies that $z_{t}$ becomes negative soon or later: this leads to a
contradiction. In this case, capital transition becomes $k_{t+1}=\rho k_{t}^{\alpha }$, where $\rho \equiv \gamma \alpha A$. Solving recursively, we find the explicit
solution (\ref{110}). 

\item  Let $\gamma>1$. If $p_{t}=0$, then (\ref{110}) follows immediately.

If $p_{t}>0$. Then, we obtain%
\begin{equation}
z_{t}=\frac{\left[ \left( \gamma-1\right) z_{0}-1\right] \gamma
^{t}+1}{\gamma-1}  \label{x_t}.
\end{equation}%
A positive solution exists if and only if $z_{0}\geq 1/\left( \gamma-1\right) $. Hence, the existence of a positive solution requires%
\begin{equation*}
p_{0}\leq (\gamma-1)K_{1}=(\gamma-1)%
\left[ \frac{\beta }{1+\beta }w_0 -p_{0}\right].
\end{equation*}%
Solving this inequality for $p_{0}$, we find $0<p_{0}\leq \bar{b}$.

\end{enumerate}

We now observe that for $p_{0}\in \left( 0,\bar{b}\right] $ given, the sequence $\left(
K_{t+1},p_{t}\right) $ constructed by (\ref{103}) and (\ref{107}) is a solution with $p_{t}>0$ for any $t$.

When $p_{0}<\bar{b}$ (that is $z_{0}>1/\left( \gamma-1\right) $),
thanks to (\ref{x_t}), we get $\lim_{t\rightarrow \infty }z_{t}=\infty $.
According to (\ref{103}), $K_{t}$ is uniformly bounded from above, which
implies that $\lim_{t\rightarrow \infty }p_{t}=0$. Thus, $\lim_{t\rightarrow
\infty }K_{t}=K^{\ast }$.

When $p_{0}=\bar{b}$, we have $z_{t}=1/\left( \gamma-1\right) $  $\forall t\geq 0.$ In this case, $k_{t+1}=\rho _{1}k_{t}^{\alpha }$, where $\rho
_{1}\equiv \alpha A$, and $p_{t}=\left( \gamma-1\right)
k_{t+1}$. Solving recursively, we get  (\ref{122}).

\end{proof}

{\small

}
\end{document}